\newcommand{\braket}[2]{\langle#1|#2\rangle}
\newcommand{\bra}[1]{\left\langle{#1}\right\vert}
\newcommand{\ket}[1]{\left\vert{#1}\right\rangle}
\newcommand{\ketbra}[2]{|#1\rangle \langle#2|}
\newcommand{\be}{\begin{equation}}
\newcommand{\ee}{\end{equation}}
\newcommand{\ba}{\begin{array}}
\newcommand{\ea}{\end{array}}
\newcommand{\bqa}{\begin{eqnarray}}
\newcommand{\eqa}{\end{eqnarray}}
\newcommand{\ie}{{\it i.e. }}
\newcommand{\tr}{\mbox{Tr}}
\newcommand{\rhalf}{\mbox{$\textstyle \frac{1}{\sqrt{2}}$}}
\newcommand{\norm}[1]{\left|\left|#1\right|\right|}
\newtheorem{theorem}{Theorem}
\newtheorem{corollary}{Corollary}
\newcommand{\COMMENT}[1]{}
\newcommand{\anote}[1]{\textcolor{black}{#1}}
\newcommand{\kb}[1]{\ketbra{#1}{#1}}
\begin{document}

\title{Experimental Verification of Multipartite Entanglement in Quantum Networks}

\author{W. McCutcheon}
\address{Quantum Engineering Technology Laboratory, Department of Electrical and Electronic Engineering, University of Bristol, Woodland Road, Bristol, BS8 1UB, UK}

\author{A. Pappa}
\email{annapappa@gmail.com}
\address{School of Informatics, University of Edinburgh, Edinburgh EH89AB, UK}

\author{B. A. Bell}
\address{Quantum Engineering Technology Laboratory, Department of Electrical and Electronic Engineering, University of Bristol, Woodland Road, Bristol, BS8 1UB, UK}

\author{A. McMillan}
\address{Quantum Engineering Technology Laboratory, Department of Electrical and Electronic Engineering, University of Bristol, Woodland Road, Bristol, BS8 1UB, UK}

\author{A. Chailloux}
\address{INRIA, Paris Rocquencourt, SECRET Project Team, Paris, France}

\author{T. Lawson}
\address{LTCI, CNRS, Telecom ParisTech, Universit\'e Paris-Saclay, 75013 Paris, France}

\author{M.~Mafu}
\address{Department of Physics and Astronomy, Botswana International University of Science and Technology, P/Bag 16 Palapye, Botswana}

\author{D. Markham}
\address{LTCI, CNRS, Telecom ParisTech, Universit\'e Paris-Saclay, 75013 Paris, France}

\author{E. Diamanti}
\address{LTCI, CNRS, Telecom ParisTech, Universit\'e Paris-Saclay, 75013 Paris, France}

\author{I. Kerenidis}
\address{CNRS IRIF, Universit\'e Paris 7, Paris 75013, France}
\address{Centre for Quantum Technologies, National University of Singapore, 3 Science Drive 2, Singapore 117543, Singapore}

\author{J. G. Rarity}
\address{Quantum Engineering Technology Laboratory, Department of Electrical and Electronic Engineering, University of Bristol, Woodland Road, Bristol, BS8 1UB, UK}

\author{M. S. Tame} 
\email{markstame@gmail.com}
\address{School of Chemistry and Physics, University of KwaZulu-Natal, Durban 4001, South Africa}
\address{National Institute for Theoretical Physics, University of KwaZulu-Natal, Durban 4001,
South Africa}

\date{\today}

\begin{abstract}
Multipartite entangled states are a fundamental resource for a wide range of quantum information processing tasks. In particular, in quantum networks it is essential for the parties involved to be able to verify if entanglement is present before they carry out a given distributed task. Here we design and experimentally demonstrate a protocol that allows any party in a network to check if a source is distributing a genuinely multipartite entangled state, even in the presence of untrusted parties. The protocol remains secure against dishonest behaviour of the source and other parties, including the use of system imperfections to their advantage. We demonstrate the verification protocol in a three- and four-party setting using polarization-entangled photons, highlighting its potential for realistic photonic quantum communication and networking applications.
\end{abstract}


\maketitle

\section{Introduction}

Entanglement plays a key role in the study and development of quantum information theory and is a vital component in quantum networks~\cite{Horodecki,Kimble,Scarani,Chiri,Pers}. The advantage provided by entangled states can be observed, for example, when the quantum correlations of the $n$-party Greenberger-Horne-Zeilinger (GHZ) state~\cite{GHZ} are used to win a nonlocal game with probability 1, while any classical local theory can win the game with probability at most $3/4$ (see Ref.~[7]). In a more general setting, multipartite entangled states allow the parties in a network to perform distributed tasks that outperform their classical counterparts~\cite{Buh}, to delegate quantum computation to untrusted servers~\cite{broadbent:focs09}, or to compute through the measurement-based quantum computation model~\cite{raussendorf:prl01}. It is therefore vital for parties in a quantum network to be able to verify that a state is entangled, especially in the presence of untrusted parties and by performing only local operations and classical communication.

A protocol for verifying that an untrusted source creates and shares the $n$-qubit multipartite entangled GHZ state, $\ket{GHZ_n}=\frac{1}{\sqrt{2}}\big(|0\rangle^{\otimes n}+|1\rangle^{\otimes n}\big)$, with $n$ parties has recently been proposed~\cite{Pappa12}. In the verification protocol, the goal of the honest parties is to determine how close the state they share is to the ideal GHZ state and verify whether or not it contains genuine multipartite entanglement (GME) -- entanglement that can only exist if all qubits were involved in the creation of the state~\cite{Horodecki}. On the other hand, any number of dishonest parties that may collaborate with the untrusted source are trying to `cheat' by convincing the honest parties that the state they share is close to the ideal GHZ state and contains GME when this may not be the case. Verifying GME in multipartite GHZ states in this way is relevant to a wide variety of protocols in distributed quantum computation and quantum communication. While distributed quantum computation is at an early stage of development experimentally~\cite{Barz11,Barz13,Greganti16}, many schemes for using multipartite GHZ states in distributed quantum communication have already been demonstrated, including quantum secret sharing~\cite{Tittel01}, open-destination teleportation~\cite{Zhao} and multiparty quantum key distribution~\cite{Chen05,Adamson06}. This makes the entanglement verification protocol relevant for distributed quantum communication with present technology.

\begin{figure*}[t]
\includegraphics[width=9.5cm]{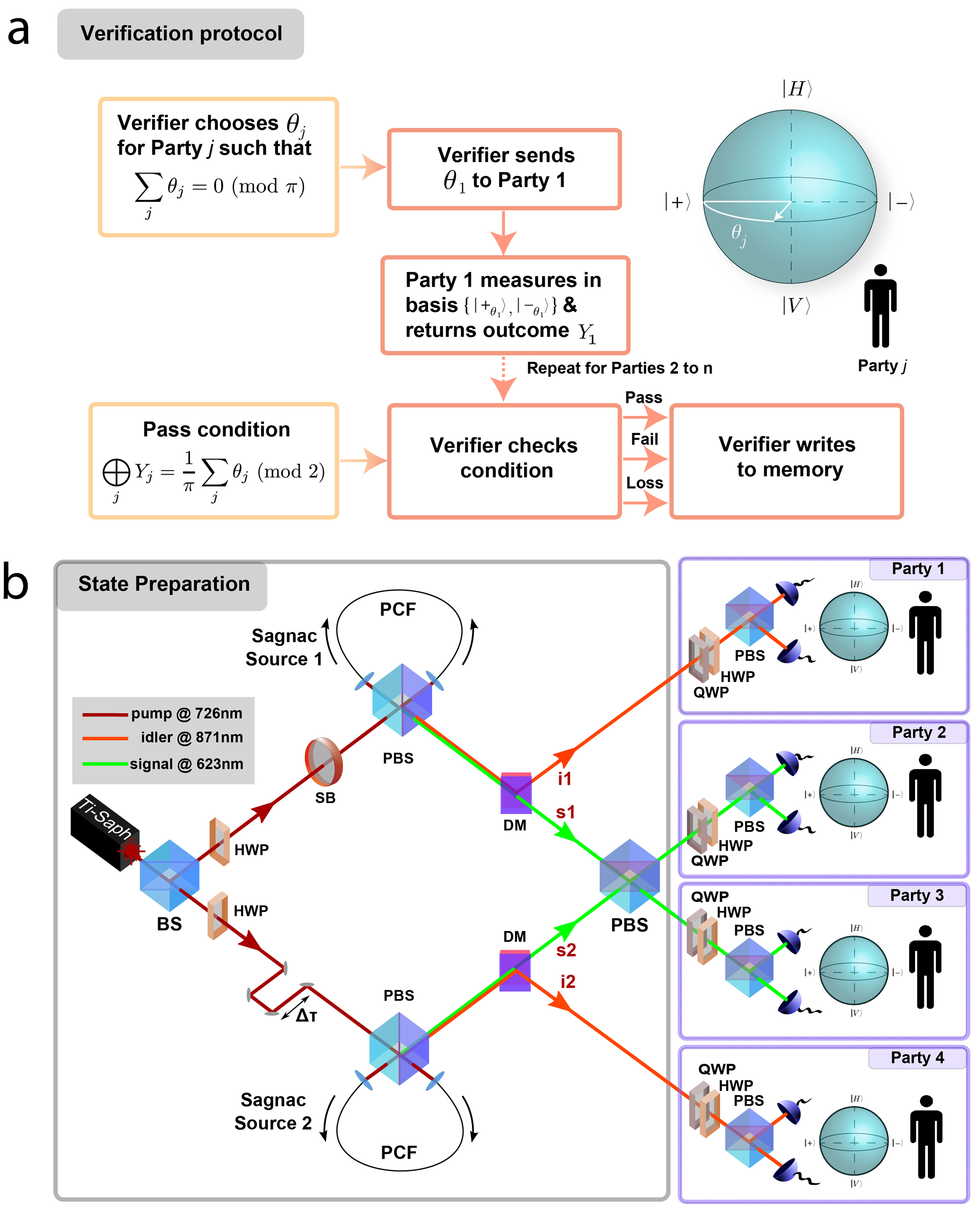}
\caption{The verification protocol and experimental setup. {\bf a}, A flow diagram showing the steps of the verification protocol. {\bf b,} The experimental setup for state preparation, consisting of a femto-second laser (Spectra-Physics Tsunami) filtered to give $1.7$~nm bandwidth pulses at $726$~nm. The laser beam is split by a beamsplitter (BS) into two modes with the polarisation set to diagonal by half-wave plates (HWPs). One mode undergoes a temporal offset, $\Delta T$, using a translation stage and the other a phase rotation using a Soleil-Babinet (SB) compensator. The modes each enter a photonic crystal fibre (PCF) source via a polarising beamsplitter (PBS) in a Sagnac configuration, enabling pumping in both directions. The sources generate non-degenerate entangled signal and idler photon pairs by spontaneous four-wave mixing. Temperature tuning in one of the sources is used to match the spectra of the resulting signal photons in the other source. The entangled photon pairs exit the sources via the PBS and due to their non-degenerate wavelengths they are separated by dichroic mirrors (DMs) and filtered with $\Delta \lambda_s=40$nm at $\lambda_s = 623$~nm (tunable $\Delta \lambda_i = 2$~nm at $\lambda_i= 871$~nm) in the signal (idler) to remove any remaining light from the pump laser. The signal photons from each pair interfere at a PBS and all photons are collected into single-mode fibres. Pairs of automated half- and quarter-wave plates (QWPs) on each of the four output modes from the fibres allow arbitrary rotations to be made before the modes are split by PBSs and the light is detected by eight silicon avalanche photodiode detectors (APDs). The protocol's software (outlined in panel (a)) is linked to an 8-channel coincidence counting box (Qumet MT-30A) and the automated wave plates in order to set each unique measurement basis for the parties and detect single-shot four-fold coincidences.}
\label{fig:ExperimentalPlanPlayers}
\end{figure*} 

In order for a quantum protocol to be practical, however, it must take into account system imperfections, including loss and noise, throughout the protocol (generation, transmission and detection of the quantum state). In previous work~\cite{Pappa12}, it was shown that by using a suitable protocol, the closeness of a shared resource state to a GHZ state and the presence of GME can be verified in a distributed way between untrusted parties under perfect experimental conditions. However, the protocol is not tolerant to arbitrary loss and in fact it cannot be used for a loss rate that exceeds 50\%. 

In this work, we design and experimentally demonstrate a protocol that outperforms the original one in Ref.~[11]. We examine quantitatively how a dishonest party can use system imperfections to boost their chances of cheating and show our protocol defends against such tactics. We demonstrate both the original and new protocols using a source of polarization-entangled photons, which produces three- and four-party GHZ states, and examine the performance of the protocols under realistic experimental conditions. Our results are perfectly adapted to photonic quantum networks and can be used to reliably verify multipartite entanglement in a real-world quantum communication setting. In order to achieve verification of a state in an untrusted setting, the protocols exploit the capability of GHZ states to produce extremal correlations which are unobtainable by any quantum state that is not locally equivalent to the GHZ state. This property has been shown to bound state fidelities in the fully device independent setting of nonlocality via self-testing~\cite{Mayers04,Pal14,McKague11}. In addition, a related recent study~\cite{cavalcanti:natcom15} has proposed a method to detect multipartite entanglement in the `steering' setting in which some of the devices are known to be untrusted (or defective), by using one-sided device-independent entanglement witnesses. Our protocols extend beyond these methods by allowing the \emph{amount} of entanglement to be quantified in terms of an appropriate fidelity measure in a setting where some unknown parties are untrusted, as well as providing a method for dealing with loss and other inefficiencies in the system. This makes our protocols and analysis more appropriate for a realistic network setting.

\section{Results}

\subsection{The verification protocol}

The network scenario we consider consists of a source that shares an $n$-qubit state $\rho$ with $n$ parties, where each party receives a qubit. One of the parties, a `Verifier', would like to verify how close this shared state is to the ideal state and whether or not it contains GME. The protocol to do this is as follows: First, the Verifier generates random angles $\theta_j \in [0,\pi)$ for all parties including themselves ($j\in[n]$), such that $\sum_j \theta_j$ is a multiple of $\pi$. The angles are then sent out to all the parties in the network. When party $j$ receives their angle from the Verifier they measure in the basis $\{\ket{+_{\theta_j}},\ket{-_{\theta_j}}\}=\{\frac{1}{\sqrt{2}}(\ket{0}+e^{i\theta_j}\ket{1}),\frac{1}{\sqrt{2}}(\ket{0}-e^{i\theta_j}\ket{1})\}$ and send the outcome $Y_j=\{0,1\}$ to the Verifier. A flow diagram of the protocol is shown in Fig.~\ref{fig:ExperimentalPlanPlayers}a, where the order in which the angles are sent out and outcomes returned is irrelevant and it is assumed that the Verifier and each of the parties share a secure private channel for the communication. This can be achieved by using either a one-time pad or quantum key distribution~\cite{Scarani}, making the communication secure even in the presence of a quantum computer. The state passes the test when the following condition is satisfied: if the sum of the randomly chosen angles is an even multiple of $\pi$, there must be an even number of $1$ outcomes for $Y_j$, and if the sum is an odd multiple of $\pi$, there must be an odd number of $1$ outcomes for $Y_j$. We can write this condition as
\be
\bigoplus_j Y_j=\frac{1}{\pi}\sum_j\theta_j\pmod 2.
\ee

For an ideal $n$-qubit GHZ state, the test succeeds with probability 1 (see Supplementary Note 1). Moreover, it can be shown that the fidelity $F(\rho)=\bra{GHZ_n}\rho\ket{GHZ_n}$ of a shared state $\rho$ with respect to an ideal GHZ state can be lower bounded by a function of the probability of the state passing the test, $P(\rho)$. If we first suppose that all $n$ parties are honest, then $F(\rho)\geq 2P(\rho)-1$ (see Supplementary Note 1). Furthermore, we can say that GME is present for a state $\rho$ when $F(\rho)>1/2$ with respect to an ideal GHZ state~\cite{TothGuhne} and therefore GME can be verified when the pass probability is $P(\rho)>3/4$. This verification protocol, that we will call the `$\theta$-protocol', is a generalisation of the protocol in Ref.~[11], called the `$XY$-protocol', where the angles $\theta_j$ are fixed as either $0$ or $\pi/2$, corresponding to measurements in the Pauli $X$ or $Y$ basis. In the honest case and under ideal conditions, the lower bound for the fidelity is the same in both protocols.

When the Verifier runs the test in the presence of $n-k$ dishonest parties, the dishonest parties can always collaborate and apply a local or joint operation $U$ to their part of the state. This encompasses the different ways in which the dishonest parties may try to cheat in the most general setting. Hence, we look at a fidelity measure given by $F'(\rho)=\max_U F\big((\mathbb{I}_k\otimes U_{n-k})\rho(\mathbb{I}_k\otimes U^\dag_{n-k})\big)$, and lower bound it by the pass probability as $F'(\rho)\geq 4P(\rho)-3$ for both the $\theta$ and $XY$ protocols (see Supplementary Note 1). This gives directly a bound of $P(\rho)>7/8=0.875$ to observe GME. However, by concentrating on attacks for the case $F'(\rho)=1/2$, tighter analysis can be performed (see Supplementary Note 1), where the GME bound can be shown to be $P(\rho) \geq 1/2 +1/\pi \approx 0.818$ for the $\theta$-protocol and $P(\rho) \geq \cos^2(\pi/8)\approx 0.854$ for the $XY$ protocol. The $\theta$-protocol is more sensitive to detecting cheating and hence can be used to verify GME more broadly in realistic implementations where the resources are not ideal.

The above bounds do not account for loss. To analyse cheating strategies which take advantage of loss we must allow the dishonest parties (which have potentially perfect control of the source and their equipment) to choose to declare `loss' at any point. In particular they may do this when they are asked to make measurements that would reduce the probability of success, making the round invalid, which can skew the statistics in favour of passing to the advantage of the dishonest parties. This may change the fidelity and GME bounds above. We address this to find GME bounds in the case of loss in our photonic realization.

\begin{figure*}[t]
\includegraphics[width=15cm]{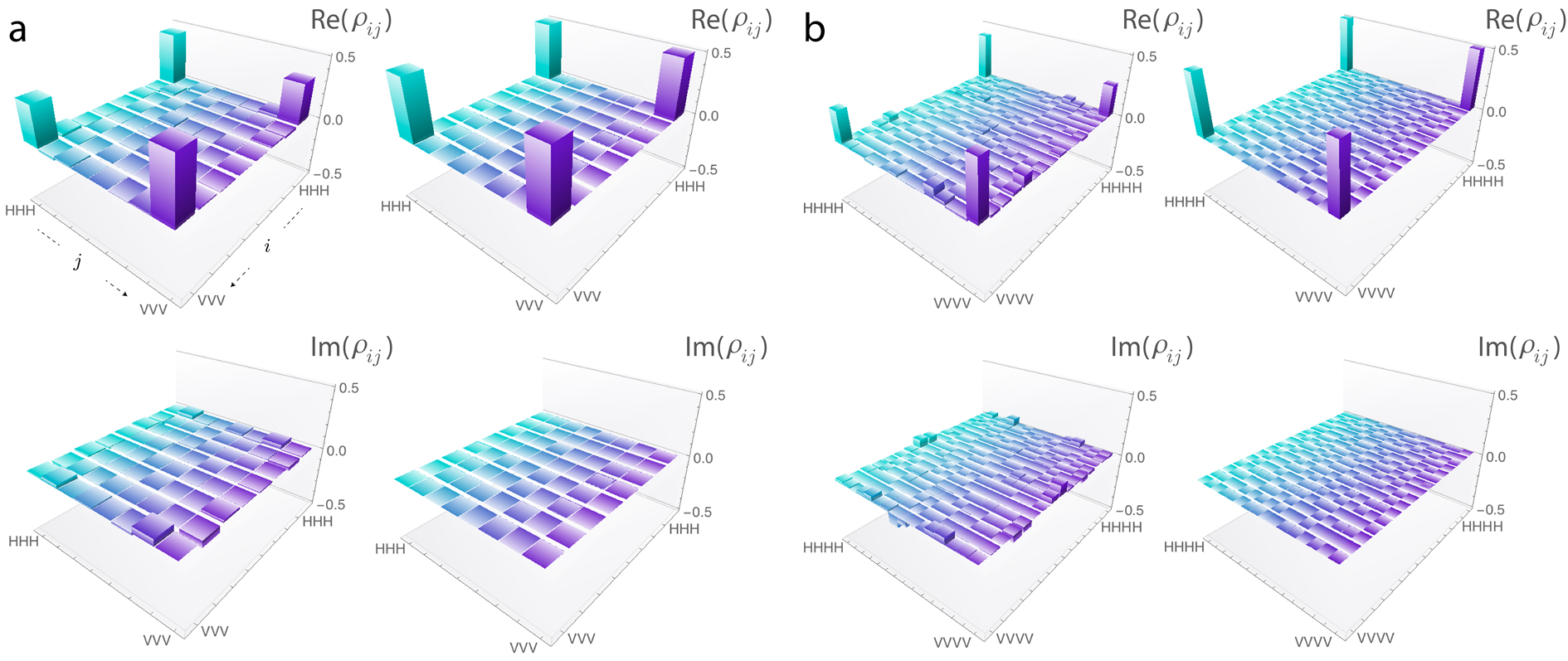}
\caption{Tomographic reconstruction of the three- and four-photon GHZ states used in the protocols. {\bf a,} Three-photon GHZ state (left column) and ideal case (right column). {\bf b,} Four-photon GHZ state (left column) and ideal case (right column). Top row corresponds to the real parts and bottom row corresponds to the imaginary parts. The density matrix elements are given by $\rho_{ij}=\bra{i}\rho_{exp} \ket{j}$, where $\rho_{exp}$ is the reconstructed experimental density matrix.}
\label{fig:DensMats4}
\end{figure*} 

\subsection{Experimental setup}

The optical setup used to perform the verification protocols is shown in Fig.~\ref{fig:ExperimentalPlanPlayers}b. The source of GHZ states consists of two micro-structured photonic crystal fibers (PCFs), each of which produces a photon pair by spontaneous four-wave mixing, with the signal wavelength at $623$~nm and the idler at $871$~nm  (see Supplementary Note 2). To generate entangled pairs of photons, each fibre loop is placed in a Sagnac configuration, where it is pumped in both directions. When the pump pulse entering the Sagnac loop is in diagonal polarisation, conditional on a single pair being generated by the pump laser the state exiting the polarising beamsplitter (PBS) of the loop is in the Bell state $\frac{1}{\sqrt{2}}\left(\ket{H}_{\rm s}\ket{H}_{\rm i}+\ket{V}_{\rm s}\ket{V}_{\rm i}\right)$, with ${\rm s}$ and ${\rm i}$ indicating the signal and idler photons, respectively~\cite{Halder09,Clark11}. The signal and idler photons of each source are then separated into individual spatial modes by dichroic mirrors, after which the two signal photons are overlapped at a PBS that performs a parity check, or `fusion' operation~\cite{Bell12,Bell13}. We postselect with $50\%$ probability the detection outcomes in which one signal photon emerges from each output mode of the PBS which projects the state onto the four-photon GHZ state
\be
\frac{1}{\sqrt{2}}\left(\ket{H}_{\rm i_1}\ket{H}_{\rm s_1}\ket{H}_{\rm s_2}\ket{H}_{\rm i_2}+\ket{V}_{\rm i_1}\ket{V}_{\rm s_1}\ket{V}_{\rm s_2}\ket{V}_{\rm i_2}\right). 
\ee
All four photons are then coupled into single-mode fibres, which take them to measurement stages representing the parties in the network. With appropriate angle choices of the wave plates included in these stages any projective measurement can be made by the parties on the polarisation state of their photon~\cite{James01}. In our experiment the successful generation of the state is conditional on the detection of four photons in separate modes, {\it i.e.} postselected. In principle it is possible to move beyond postselection in our setup, where the GHZ states are generated deterministically. This can be achieved by the addition of a quantum non-demolition (QND) measurement of the photon number in the modes after the fusion operation. While technically challenging, QND measurements are possible for photons, for instance as theoretically shown~\cite{Imoto,Xiao} and experimentally demonstrated~\cite{Guerlin}. By using postselection we are able to give a proof-of-principle demonstration of the protocols and gain important information about their performance in such a scenario, including the impact of loss.

In our experiments we use both a three- and a four-photon GHZ state. The generation of the three-photon state requires only a slight modification to the setup, with one of the PCFs pumped in just one direction to generate unentangled pairs (see Supplementary Note 2). Before carrying out the verification protocols we first characterise our experimental GHZ states by performing quantum state tomography~\cite{James01}. The resulting density matrices for the three- and four-photon GHZ states are shown in Fig.~\ref{fig:DensMats4} and have corresponding fidelities $F_{GHZ_3}=0.80 \pm 0.01$ and $F_{GHZ_4}=0.70 \pm 0.01$ with respect to the ideal states. These fidelities compare well with other recent experiments using photons (see Table 1) and are limited mainly by dephasing from the fusion operation~\cite{Bell12} and higher-order emission (see Supplementary Note 2). The errors have been calculated using maximum likelihood estimation and a Monte Carlo method with Poissonian noise on the count statistics, which is the dominant source of error in our photonic experiment~\cite{James01}.

\begin{table}[b]
\centering
\begin{center}
\begin{tabular}{ |c|c|c| } 
\hline
 \hline
 3-photon GHZ Fidelity & 4-photon GHZ Fidelity \\
 \hline 
 $F=0.80 \pm 0.01$, this work & $F=0.70 \pm 0.01$, this work \\ 
 $F=0.768 \pm 0.015$~\cite{Resch05} & $F=0.840 \pm 0.007$~\cite{Zhao03} \\
 $F=0.74 \pm 0.01$~\cite{Zhou08} & $F=0.66 \pm 0.01$~\cite{Bell13} \\ 
 $F=0.811\pm 0.002$~\cite{Lu08} & $F=0.833 \pm 0.004$~\cite{Wang16} \\ 
  $F=0.93\pm 0.01$~\cite{Patel16} & \\ 
  \hline
  \hline
\end{tabular}
\end{center}
\caption{Comparison of GHZ fidelities. The table shows the fidelity of recent three-photon and four-photon GHZ states from other experiments, and includes the fidelities from this work (top row).}
\label{tab:GHZfid} 
\end{table}

\subsection{Entanglement verification}

To demonstrate the verification of multipartite entanglement we use the polarisation degree of freedom of the photons generated in our optical setup. The computational basis states sent out to the parties are therefore defined as $\ket{0}=\ket{H}$ and $\ket{1}=\ket{V}$ for a given photon. Furthermore, the verification protocol relies on a randomly selected set of angles being distributed by the Verifier for each state being tested. To ensure dishonest parties have no prior knowledge, the set of angles is changed after every detection of a copy of the state, {\it i.e.} we perform {\it single-shot} measurements in our experiment. To achieve this, we use automated wave-plate rotators to change the measurement basis defined by the randomised angles for each state. The rotators are controlled by a computer with access to the incoming coincidence data. This approach is needed to provide a faithful demonstration of the protocol and is technologically more advanced than the usual method used in photonic quantum information experiments, where many detections are accumulated over a fixed integration time for a given measurement basis and properties then inferred from the ensemble of states. We now analyse the performance of the $XY$ and $\theta$ verification protocols for the three- and four-party GHZ states.

\begin{figure*}[t]
\includegraphics[width=13cm]{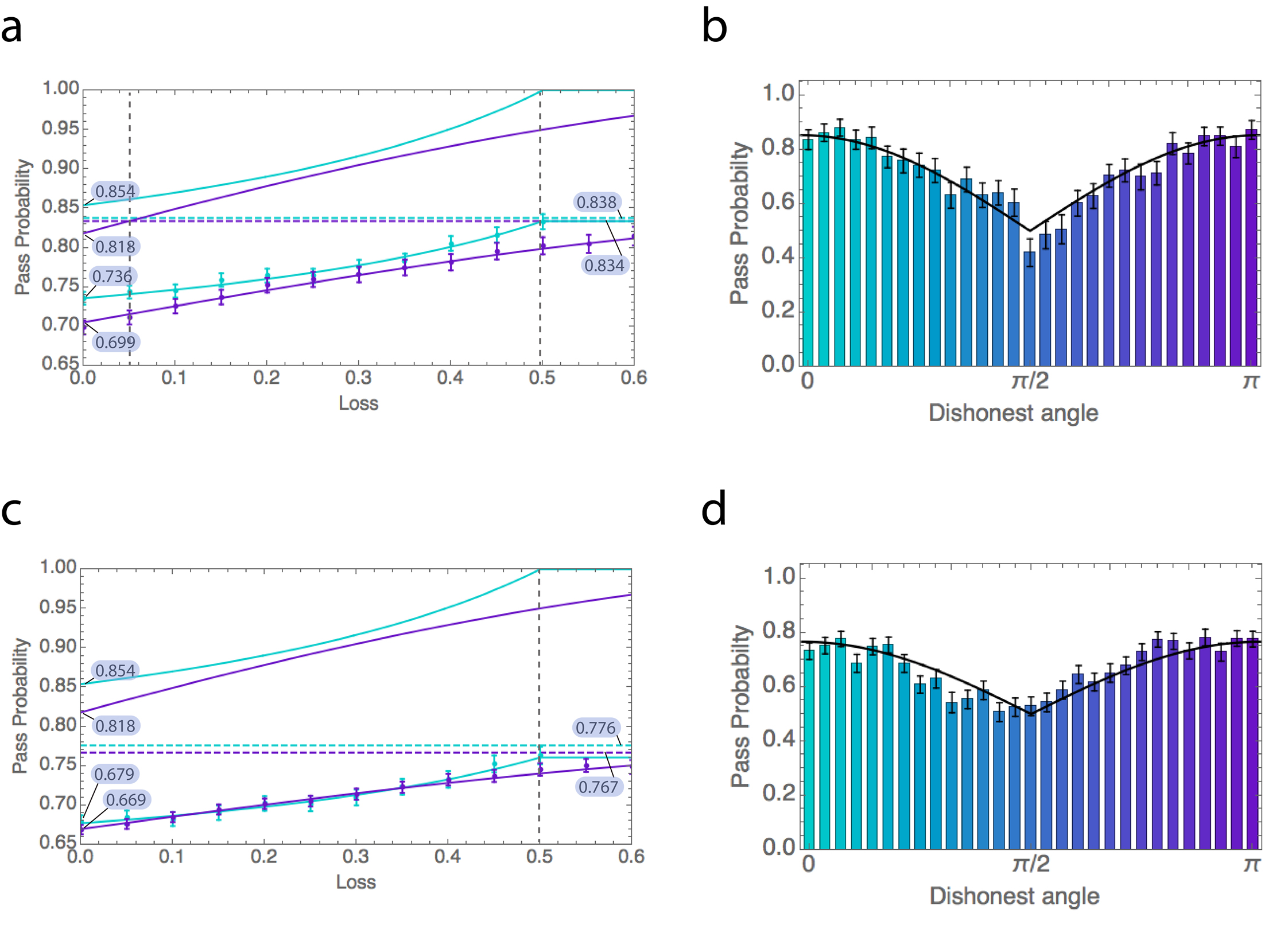}
\caption{Pass probabilities as a function of loss for one dishonest party in a three- and four-party setting. Panels {\bf a} and {\bf b} correspond to the three-party setting, and panels {\bf c} and {\bf d} correspond to the four-party setting. The upper curves in panels {\bf a} and {\bf c} show the ideal theoretical case for the GME bound for the $\theta$-protocol (purple curve) and a cheating strategy for the $XY$-protocol (turquoise curve) that always performs better. Note that the $XY$-protocol cannot be used here for verification as the non-GME dishonest value is always above the honest value. The lower solid curves in panels {\bf a} and {\bf c} correspond to the experimental results obtained for the three- and four-photon GHZ state, respectively. In both panels, the dashed lines correspond to the honest experimental values when there is no loss (turquoise for the $XY$-protocol and purple for the $\theta$-protocol). Panels {\bf a} and {\bf c} clearly show that the $\theta$-protocol can tolerate loss $\gtrsim 0.5$ in the ideal case. Panels {\bf b} and {\bf d} show the optimal pass probability that the dishonest party can obtain when running the $\theta$-protocol with no loss, for a given dishonest angle $\theta$, for the three-party and four-party case, respectively. In all plots the curves are a best fit to the data. All error bars represent the standard deviation and are calculated using a Monte Carlo method with Poissonian noise on the count statistics~\cite{James01}.}
\label{fig:LossDependentPassRates}
\end{figure*}

{\it 1. Verification of three-party GHZ --} The $XY$ verification protocol was initially carried out using the three-photon GHZ state, with all parties behaving honestly. The first two angles $\theta_j$ were randomly chosen to be either $0$ or $\pi/2$, with the third angle representing the Verifier being decided so that $\sum_j\theta_j$ is a multiple of $\pi$. After repeating the protocol on 6000 copies of the state, the pass probability was found to be $0.838 \pm 0.005$. Similarly, the $\theta$-protocol was carried out, with the first two angles chosen uniformly at random from the continuous range $[0,\pi)$. After 6000 copies of the state were prepared and measured, the pass probability was found to be $0.834 \pm 0.005$.

Using the relation between the fidelity and the pass probability, $F(\rho)\geq 2P(\rho)-1$, the Verifier can conclude that the fidelity with respect to an ideal GHZ state is at least $0.676 \pm 0.010$ for the $XY$-protocol and at least $0.668 \pm 0.010$ for the $\theta$-protocol. These values are consistent with the value obtained using state tomography. Despite the non-ideal experimental resource, the lower bound on the fidelity is clearly above 1/2 and therefore sufficient for the Verifier to verify GME in this all honest case.

More importantly, the $\theta$-protocol enables the Verifier to verify GME even when they do not trust all of the parties. Indeed, the experimental value of the pass probability, $0.834$, exceeds by more than three standard deviations the GME bound of $0.818$ for the dishonest case. We remark that for verifying GME in these conditions we crucially used the fact that our three-qubit GHZ state has very high fidelity and that the $\theta$-protocol has improved tolerance to noise. In fact, the Verifier is not able to verify GME using the $XY$-protocol, since the experimental value of  $0.838$ does not exceed the GME bound of $0.854$.

{\it 2. Verification of three-party GHZ with loss (theory) --} We now investigate the impact of loss on the performance of the verification protocols. In this setting, the Verifier is  willing to accept up to a certain loss rate from each party. When a party declares loss, the specific run of the protocol is aborted and the Verifier moves on to testing the next copy of the resource state. A dishonest party, who may not have the maximum allowed loss rate in their system, or may even have no loss at all, can increase the overall pass probability of the state by declaring loss whenever the probability to pass a specific measurement request from the Verifier is low. 

For example, a non-GME state can have pass probability 1 for the $XY$-protocol when the allowed loss rate is $50\%$. In this case, the source can share a state of the form $\frac{1}{\sqrt{2}}(\ket{HH}+\ket{VV})\otimes\ket{+}$, where the third qubit is sent to a dishonest party. Then, when the latter is asked to measure in the Pauli $X$ basis, he always answers correctly, while when asked to measure in the Pauli $Y$ basis he declares loss. Of course, such a strategy would alert the Verifier that the party is cheating, since he is always declaring loss when asked to measure in the $Y$ basis, while when asked to measure in the $X$ basis, he always measures the $|+\rangle$ eigenstate. However, if the source and the dishonest party are collaborating, and the source is able to create and share any Bell pair with the two honest parties, then the test can be passed each time without the cheating detected. The dishonest strategy would go as follows: the source sends randomly one of the four states $\{\frac{1}{\sqrt{2}}(\ket{HH}+\ket{VV}),\frac{1}{\sqrt{2}}(\ket{HH}-\ket{VV}),\frac{1}{\sqrt{2}}(\ket{HH}+i\ket{VV}),\frac{1}{\sqrt{2}}(\ket{HH}-i\ket{VV})\}$ and tells the dishonest party which one was sent, so that the latter can coordinate his actions. For the first state he replies 0 only for the $X$ basis, for the second state he replies 1 only for the $X$ basis, for the third he replies 1 only for the $Y$ basis and for the fourth he replies 0 only for the $Y$ basis.

More generally, we can analytically find the GME bound as a function of the loss rate for both protocols and describe optimal cheating strategies to achieve these bounds with non-GME states. The optimal cheating strategy for the $XY$-protocol consists of the source rotating the non-GME state that is sent to the honest parties in a specific way depending on the amount of loss allowed, and informing the dishonest party about the rotation. For zero loss, the optimal state is the $\pi/4$-rotated Bell pair $\frac{1}{\sqrt{2}}(\ket{HH}+e^{i\frac{\pi}{4}}\ket{VV})$, while for $50\%$ loss, the optimal state is the Bell pair $\frac{1}{\sqrt{2}}(\ket{HH}+\ket{VV})$. For any loss, $\lambda$, in between, the dishonest strategy is a probabilistic mixture of these two strategies; it consists of sending the Bell pair with probability $2\lambda$ (and discarding the rounds in which the dishonest party is asked to measure $Y$), and the $\pi/4$-rotated Bell pair with probability $1-2\lambda$. In both, the strategy mentioned in the previous paragraph for avoiding detection of the dishonest party's cheating is required. On the other hand, the optimal strategy for the $\theta$-protocol is having the source send a rotated Bell pair with the dishonest party declaring loss for the angles that have the lowest pass probability (see Supplementary Note 1). 

The upper bounds of the pass probability for the optimal cheating strategies using a non-GME state are shown as the solid turquoise and purple upper curves in Fig.~\ref{fig:LossDependentPassRates}, for the $XY$ and $\theta$-protocol respectively. Specifically for the case of no loss, we recover the GME bounds of $0.854$ and $0.818$ for the $XY$- and $\theta$-protocol, respectively. The GME bound for the $XY$-protocol reaches 1 for $50\%$ loss, while the GME bound for the $\theta$ protocol reaches 1 only at $100\%$ loss. 

{\it 3. Verification of three-party GHZ with loss (experiment) --} In Fig.~\ref{fig:LossDependentPassRates}a one can see the experimental value of $0.834 \pm 0.005$ when there is no loss for the $\theta$-protocol enables the Verifier to verify GME in the presence of up to $\sim 5\%$ loss -- once the loss increases past $5\%$, the Verifier can no longer guarantee the shared experimental state has GME. Again, this loss tolerance is only possible due to the high fidelity of our three-party GHZ state and the fact that our $\theta$-protocol has a better behaviour with respect to loss. The tolerance to loss can be further improved using experimental states with higher fidelities. However, it is interesting to note that $5\%$ loss corresponds to $\sim$1 km of optical fibre, which already makes the protocol relevant to a quantum network within a small area, such as a city or government facility, where a number of quantum communication protocols could be carried out over the network, such as, for instance quantum secret sharing~\cite{Tittel01}, telecloning~\cite{Radmark} and open destination teleportation~\cite{Zhao}.

{\it 4. Implementation of dishonest strategies for three-party GHZ --} In order to maximise the pass probabilities of the protocols using a non-GME state, the source needs to appropriately rotate the state that is sent to the honest parties depending on the amount of loss allowed. We implemented this strategy for a single dishonest party by using a complementary method, where the source creates a three-qubit GHZ state and gets the dishonest party to perform a projective measurement that creates the necessary rotated non-GME state between the honest parties. This strategy was performed experimentally for both protocols on 3000 copies of the three-qubit GHZ state. Since in our experiment the GHZ states are created by postselection, the loss corresponds to the allowed percentage of tests in which the dishonest party can claim they lost their qubit during transmission of the corresponding photon from the source.

The pass probabilities are shown as a function of loss by the solid turquoise and purple lower curves in Fig.~\ref{fig:LossDependentPassRates}a. They show the same trend as the previous curves but are shifted lower due to the non-ideal experimental state. For the no loss case, we obtain a pass probability of $0.736 \pm 0.008$ for the $XY$-protocol. For the $\theta$-protocol, the pass probability depends on the dishonest party's measurement request $\theta$: for no loss, the experimental results are shown in Fig.~\ref{fig:LossDependentPassRates}b, from which we obtain an average pass probability of $0.699 \pm 0.009$. When loss is included the dishonest party's cheating strategy leads to a higher pass probability, since the dishonest party claims loss when the angle given to him by the Verifier is close to $\pi/2$, corresponding to the minimum pass probability shown in Fig.~\ref{fig:LossDependentPassRates}b. Similar to the discussion in the example of the $XY$-protocol, the source collaborates with the dishonest party and applies a rotation to the shared state, so that the declared lost angles appear uniform and not always around $\pi/2$.

{\it 5. Verification of four-party GHZ --} To check the performance of the protocols for a higher number of parties, the verification tests were carried out using the four-photon GHZ state generated in our experiment, now with three angles chosen randomly, and the fourth depending on the condition that $\sum_j\theta_j$ is a multiple of $\pi$.  Again, we start with the all honest case where any of the parties may be the Verifier. For the $XY$-protocol, with all $\theta_j$ equal to $0$ or $\pi/2$, the pass probability for 6000 copies of the state was found to be $0.776 \pm 0.005$. For the $\theta$-protocol, using 6000 copies, the pass probability was found to be $0.767 \pm 0.005$.

As in the three-party case, the Verifier can conclude that the fidelity with respect to an ideal GHZ state is at least $0.552 \pm 0.010$ for the $XY$-protocol and at least $0.534 \pm 0.010$ for the $\theta$-protocol, therefore just sufficient for the Verifier to verify that GME is present in the state. Again, the high fidelity of our experimental state is crucial for this result. Nevertheless, none of the two protocols can confirm GME in the presence of dishonest parties since the pass probabilities are below the GME bounds of $0.854$ and $0.818$, respectively.

{\it 6. Implementation of dishonest strategies for four-party GHZ --} The dishonest strategies that are used to implement the two verification protocols for different amounts of loss are the same as in the three-party case. However, we proceed in two different ways for a single dishonest party. First, we have the source create our non-ideal four-qubit GHZ state and then allow the dishonest party to perform the dishonest projective measurement in order to create a non-GME state. When there is no loss we obtain a pass probability of $0.679 \pm 0.008$ for the $XY$-protocol and $0.669 \pm 0.008$ for the $\theta$-protocol (averaged over the dishonest angle $\theta$, as shown in the histogram of Fig.~\ref{fig:LossDependentPassRates}d). When loss is included, the pass probabilities of both the $XY$- and $\theta$-protocols increase, as the dishonest party uses the loss to their advantage (see Fig.~\ref{fig:LossDependentPassRates}c). A second way to implement the dishonest strategy is to have the source create the non-ideal three-qubit GHZ state for the honest parties and the dishonest party hold an unentangled photon. This results in a four-party non-GME state with reduced noise -- as the dephasing from the entangled pair of the second PCF is no longer present~\cite{Bell12}. We perform the $\theta$-protocol with this better quality resource state and see that the pass probability increases from $0.669 \pm 0.005$ to $0.698 \pm 0.008$ for the no loss case and remains higher when loss is included (see Fig.~\ref{fig:GHZ4vsGHZ3}). Note that despite the second strategy having higher pass probabilities, these are still below the GME bound shown in Fig.~\ref{fig:LossDependentPassRates}c (upper purple curve). 

\begin{figure}[t]
\includegraphics[width=7.5cm]{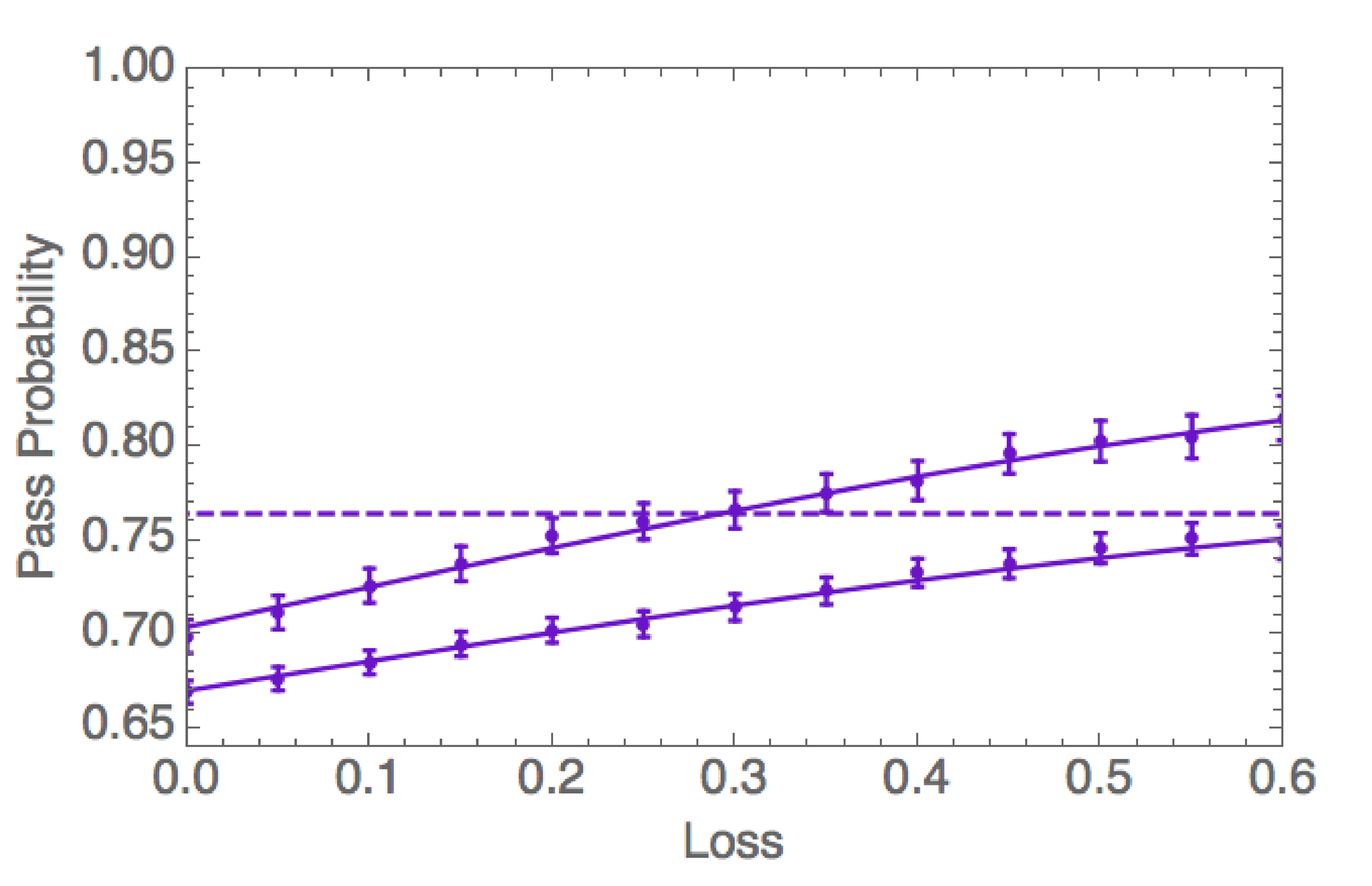}
\caption{Impact of noise and loss on the pass probability of the $\theta$-verification protocol in a four-party setting. The lower curve corresponds to a biseparable state (four-qubit GHZ state with a projective measurement on the dishonest qubit) and the upper curve corresponds to a biseparable state (three-qubit GHZ state and an unentangled qubit for the dishonest player) that has less noise. The dashed line corresponds to the honest case. All error bars represent the standard deviation and are calculated using a Monte Carlo method with Poissonian noise on the count statistics~\cite{James01}.}
\label{fig:GHZ4vsGHZ3}
\end{figure}

The comparison of the two strategies shows that the projection method is not necessarily optimal for the dishonest party due to phase noise in the experimental state. Note also that as the pass probability of the experimental state in the honest case (dotted purple line in Fig.~\ref{fig:LossDependentPassRates}c) is below the GME bound, the Verifier is not able to verify GME for this four-party setting for any amount of loss. Verification of GME is achieved in our experiment only in the three-party setting. However, four-party verification could be achieved using experimental states with higher fidelities, and even with our non-ideal three-party GHZ state we have been able to provide the first proof-of-principle demonstration of our GME verification protocol.

\section{Discussion}

The results we have presented are situated in a realistic context of distributed communication over photonic quantum networks: we have shown that it is possible for a party in such a network to verify the presence of genuine multipartite entanglement in a shared resource, even when some of the parties are not trusted, including the source of the resource itself. This distrustful setting sets particularly stringent conditions on what can be shown in practice. With our state-of-the-art optical setup that produces high-fidelity three- and four-photon GHZ states, we were able to show, for the three-party case, that this verification process is possible using a carefully constructed protocol, for up to 5\% loss, under the most strict security conditions. Clearly, the loss tolerance of the system can be further improved by using states with even higher fidelities. This would also enable the implementation of the verification protocols for a larger number of qubits.

It is important to remark that our verification protocols go beyond merely detecting entanglement; they also link the outcome of the verification tests to the state that is actually used by the honest parties of the network with respect to their ideal target state. This is non trivial and of great importance in a realistic setting where such resources are subsequently used by the parties in distributed computation and communication applications executed over the network. Such applications may also require multipartite entangled states other than the GHZ states studied in this work. We expect that our verification protocols should indeed be applicable to other types of useful states such as, for instance, stabiliser states.

\section{Acknowledgments}

This work was supported by the UK's Engineering and Physical Sciences Research Council, ERC grants 247462 QUOWSS and QCC, EU FP7 grant 600838 QWAD, the Ville de Paris Emergences project CiQWii, the ANR project COMB, the Ile-de-France Region project QUIN and the South African National Research Foundation.



\newpage

\setcounter{equation}{0}
\setcounter{figure}{0}

\renewcommand{\figurename}{FIG. S}
\renewcommand{\thefigure}{\arabic{figure}}

\begin{widetext}


\section{Supplementary Material}

\subsection{Supplementary Figures}

\begin{figure}[h]
\centering
\includegraphics[width=9cm]{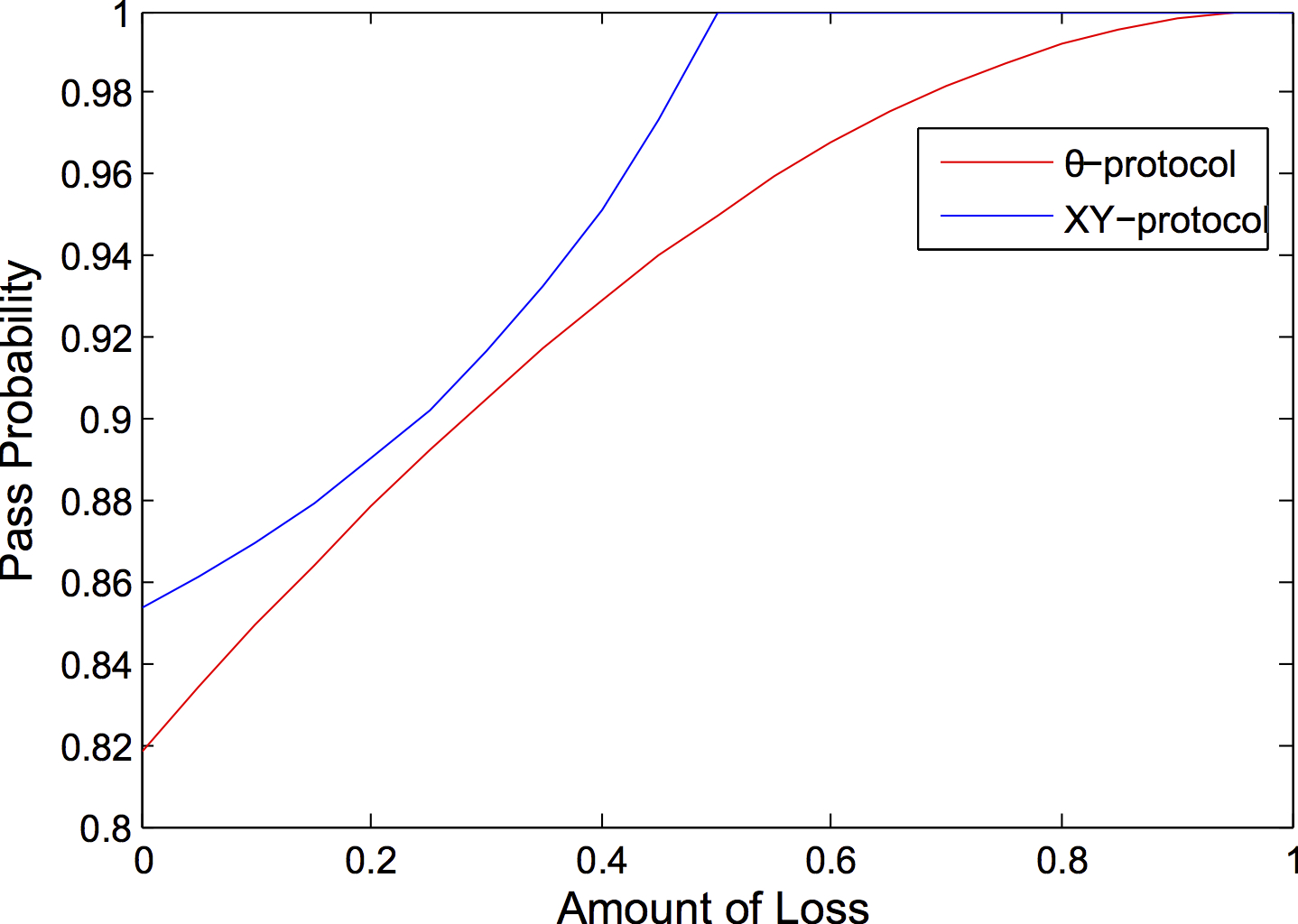}
\caption{Loss tolerance of the original and new verification protocols. The $\theta$-protocol test performs better than the $XY$-protocol test and is still viable after $50 \%$ loss.}
\label{fig:GHZ4vsGHZ3LossDependence}
\end{figure}

\begin{figure}[h]
\centering
\includegraphics[width=13cm]{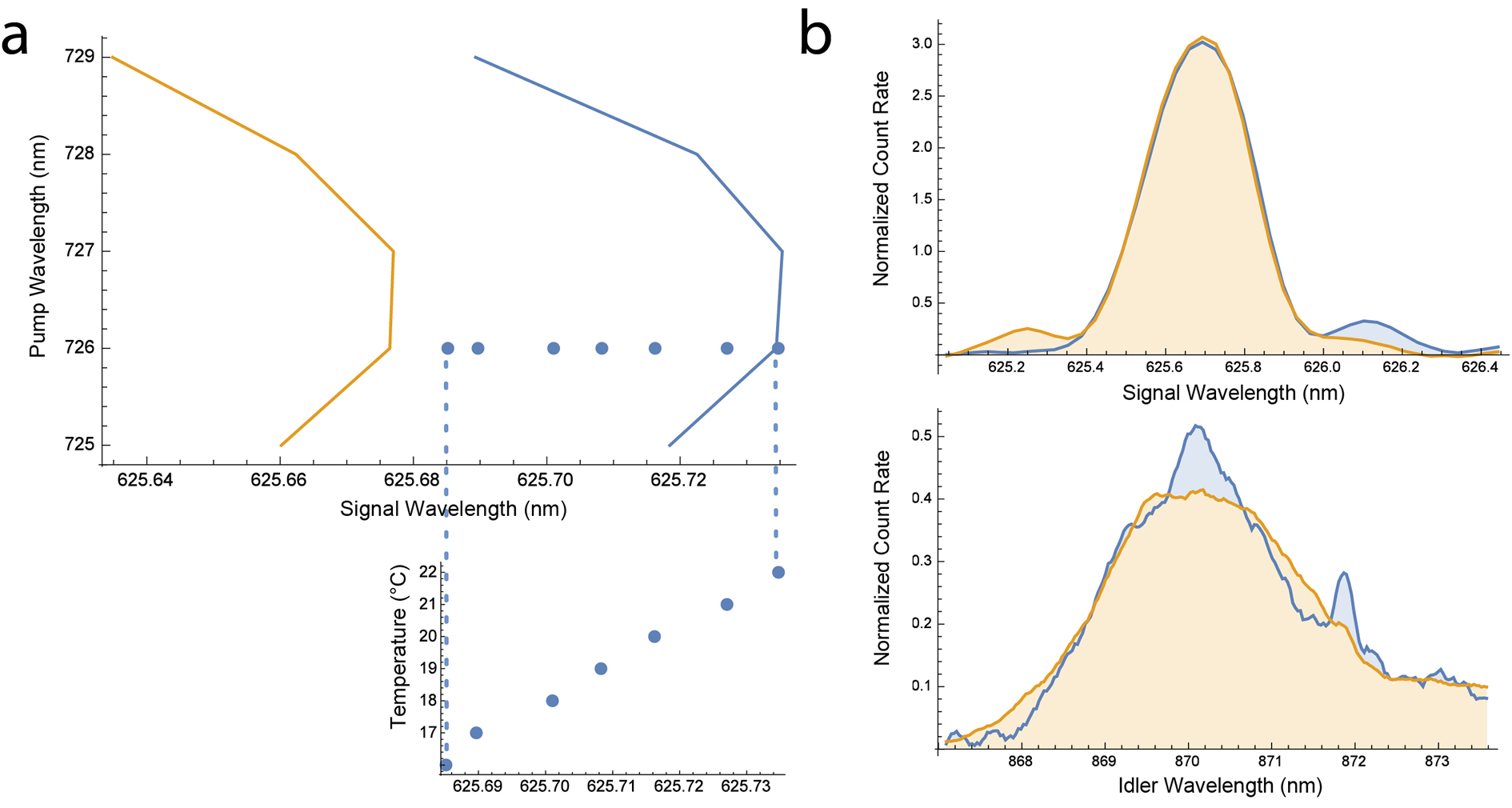}
\caption{Tuning the photonic crystal fibre sources. Spectra of the microstructured photonic crystal fiber sources. {\bf a,} Central wavelengths for source 1 (orange line) and source 2 (blue line) with varying pump wavelength. The temperature tuning of source 2 (blue points) is also shown. {\bf b,} Spectra of signal photons (top) and idler photons (bottom) from source 1 (orange) and source 2 (blue) tuned at $23.7^{\circ}C$ with a $726$~nm pump.}
\label{fig:SpectraPlots}
\end{figure}

\section*{Supplementary Note 1}

\paragraph{}In this section we provide further details of the $\theta$-protocol presented in the main text (see Fig.~1). Note that the proofs for the $XY$-protocol follow easily as a special case with only two measurement settings.


\subsection*{Correctness of the $\theta$-protocol}

\paragraph{}Here we prove that the $n$-qubit GHZ state passes the verification test with probability 1. The measurements that the parties perform in the $X$-$Y$ plane are equivalent to rotation operators around the $Z$ axis of the Bloch sphere:
\begin{equation}
R_z(\theta_j)=\begin{bmatrix}
1 & 0 \\
0 & e^{-i\theta_j}
\end{bmatrix}
\end{equation}
followed by a measurement in the Pauli $X$ basis. After the application of the rotation operators $R_z(\theta_j)$ on an $n$-qubit GHZ state, we end up with the state $\frac{1}{\sqrt{2}}\big(\ket{0}^{\otimes n}+e^{-i\Theta}\ket{1}^{\otimes n} \big)$, where $\Theta=\sum_{j=1}^n\theta_j$. When $\Theta=0\pmod{2\pi}$ the shared state written in the Pauli $X$ basis is given by a linear summation of terms with an even number of $\ket{-}$ states for the parties. On the other hand when $\Theta=\pi\pmod{2\pi}$ the shared state is given by a linear summation of terms with an odd number of $\ket{-}$ states for the parties. Thus, when the parties measure their qubits in the Pauli $X$ basis the parity of their measurements will be zero if $\Theta=0\pmod{2\pi}$ and one if $\Theta=\pi\pmod{2\pi}$. In other words, the test will always be passed with unit probability.


\subsection*{Security in the Honest Model}

\paragraph{}Now we prove a lower bound for the fidelity of the shared state, when all parties are honest, that depends on the pass probability of the test $P(\rho)$. We will specifically prove the following theorem:

\begin{theorem}[Honest Case]\label{Theorem_hon}
Let $\rho$ be the state shared between $n$ parties. \\
If $F(\rho,\ket{G_0^n}):= \bra{G_0^n}\rho\ket{G_0^n}$, where $\ket{G_0^n}$ is an $n$-qubit GHZ state, then $F(\rho)\geq 2P(\rho)-1$.
\end{theorem}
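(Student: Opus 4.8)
The plan is to reduce the claimed inequality to a statement about a single Hermitian ``verification operator'' and then invoke positivity of $\rho$. First I would rewrite the pass probability for a fixed angle assignment $\{\theta_j\}$ (with $\Theta:=\sum_j\theta_j=m\pi$, $m\in\mathbb{Z}$) as the expectation of a parity observable. Since outcome $Y_j\in\{0,1\}$ corresponds to the eigenvalue $(-1)^{Y_j}$ of $M_j=\ketbra{+_{\theta_j}}{+_{\theta_j}}-\ketbra{-_{\theta_j}}{-_{\theta_j}}=\cos\theta_j\,X+\sin\theta_j\,Y$, the parity $\bigoplus_j Y_j$ is read off from $\mathcal{M}:=\bigotimes_j M_j$, and the pass condition corresponds to the projector $\frac{1}{2}(\mathbb{I}+(-1)^m\mathcal{M})$. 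Hence $P(\rho\,|\,\{\theta_j\})=\frac{1}{2}\big(1+(-1)^m\tr(\rho\mathcal{M})\big)$, and averaging over the Verifier's angle distribution gives $2P(\rho)-1=\tr(\rho A)$ with $A:=\mathbb{E}\big[(-1)^m\mathcal{M}\big]$. The theorem is then equivalent to $\tr(\rho A)\le F(\rho)$ for every state $\rho$.

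The second step is to compute $A$ explicitly. Using the rotation picture from the correctness proof, $M_j=R_z(\theta_j)^\dagger X R_z(\theta_j)$, so $\mathcal{M}=U^\dagger X^{\otimes n}U$ with $U=\bigotimes_j R_z(\theta_j)$. Acting on a computational basis state $\ket{x}$, $x\in\{0,1\}^n$, a short calculation gives $\mathcal{M}\ket{x}=e^{i(\Theta-2\,\theta\cdot x)}\ket{\bar x}$, where $\bar x$ is the bit-complement and $\theta\cdot x=\sum_j\theta_j x_j$; because $e^{i\Theta}=(-1)^m$ and $e^{2i\Theta}=1$, this yields $(-1)^m\mathcal{M}=\sum_x e^{-2i\,\theta\cdot x}\ketbra{\bar x}{x}$. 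Therefore $A=\sum_x c_x\ketbra{\bar x}{x}$ with $c_x=\mathbb{E}[e^{-2i\,\theta\cdot x}]$, and I would then evaluate these Fourier-type coefficients under the sampling rule (the $n-1$ free angles i.i.d. uniform on $[0,\pi)$, the last one fixed by $\Theta\in\pi\mathbb{Z}$). Substituting $\theta_n$ and using $\frac{1}{\pi}\int_0^\pi e^{\pm 2i\theta}\,d\theta=0$ should collapse each coefficient to $c_x=0$ unless $x$ is the all-zero or all-one string $0^n,1^n$, for which $c_x=1$, leaving the remarkably simple operator $A=\ketbra{1^n}{0^n}+\ketbra{0^n}{1^n}$ and hence $2P(\rho)-1=2\,\mathrm{Re}\,\langle 0^n|\rho|1^n\rangle$.

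The final step is routine: writing $F(\rho)=\frac{1}{2}\big(\langle 0^n|\rho|0^n\rangle+\langle 1^n|\rho|1^n\rangle\big)+\mathrm{Re}\,\langle 0^n|\rho|1^n\rangle$, the desired bound reduces to $\mathrm{Re}\,\langle 0^n|\rho|1^n\rangle\le\frac{1}{2}\big(\langle 0^n|\rho|0^n\rangle+\langle 1^n|\rho|1^n\rangle\big)$, which follows from positivity of the $2\times 2$ principal submatrix of $\rho$ on $\mathrm{span}\{\ket{0^n},\ket{1^n}\}$ (giving $|\langle 0^n|\rho|1^n\rangle|\le\sqrt{\langle 0^n|\rho|0^n\rangle\,\langle 1^n|\rho|1^n\rangle}$) together with the AM--GM inequality. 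I expect the main obstacle to be the averaging step: one must handle the statistically dependent last angle $\theta_n$ correctly and confirm that all ``intermediate'' coherences $\ket{x}\leftrightarrow\ket{\bar x}$ (with $x\neq 0^n,1^n$) average \emph{exactly} to zero. This vanishing is precisely what makes the bound clean; although $|c_x|\le 1$ always holds, only $c_{0^n}=c_{1^n}=1$ survive the average, and if some intermediate $c_x$ failed to vanish (as could happen under a coarser angle distribution) the simple inequality $F\geq 2P-1$ would break down. This is exactly where the continuum of measurement angles in the $\theta$-protocol does its work.
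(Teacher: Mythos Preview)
Your argument is correct and the averaging step you flag as the crux does go through: for $x\neq 0^n,1^n$ you can always isolate at least one free angle $\theta_j$ (either directly if $x_n=0$, or after substituting $e^{-2i\theta_n}=e^{2i\sum_{j<n}\theta_j}$ if $x_n=1$), and $\frac{1}{\pi}\int_0^\pi e^{\pm 2i\theta}\,d\theta=0$ kills the coefficient. So indeed $A=\ketbra{1^n}{0^n}+\ketbra{0^n}{1^n}$, and the positivity-plus-AM--GM finish is clean.

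Your route differs from the paper's. The paper does not work in the computational basis or compute Fourier coefficients; instead it proves by induction on $n$ that the averaged ``pass'' POVM element is $P^n_\Theta=\ketbra{G^n_\Theta}{G^n_\Theta}+\tfrac{1}{2}I^\Theta_n$, where $I^\Theta_n$ projects onto the orthogonal complement of $\mathrm{span}\{\ket{G^n_\Theta},\ket{G^n_{\Theta+\pi}}\}$, by peeling off one party's angle and averaging. Setting $\Theta=0$ and writing $\rho=F(\rho)\ketbra{G^n_0}{G^n_0}+(1-F(\rho))\chi$ then gives $P(\rho)\le \tfrac{1}{2}+\tfrac{1}{2}F(\rho)$ in one line. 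The two expressions are equivalent: your identity $\tfrac{1}{2}(\mathbb{I}+A)=\ketbra{G^n_0}{G^n_0}+\tfrac{1}{2}I^0_n$ is exactly the paper's. Your direct computation is arguably more elementary and makes the role of the angle average very explicit; the paper's inductive POVM characterisation is coordinate-free and yields the slightly more general statement for arbitrary target phase $\Theta$, which they reuse in the dishonest analysis. One small remark: your closing comment that ``this is exactly where the continuum of angles does its work'' overstates things for the honest bound --- the same vanishing $c_x=0$ (and hence the same inequality $F\ge 2P-1$) already holds for the two-setting $XY$-protocol, since averaging $e^{\pm 2i\theta}$ over $\{0,\pi/2\}$ also gives zero. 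The continuum only becomes essential in the dishonest-case bounds.
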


\paragraph{}Let us define a test in order to verify a `rotated' GHZ state, namely $\ket{G_\Theta^n}=1/\sqrt{2}(\ket{0}^{\otimes n}+e^{-i\Theta}\ket{1}^{\otimes n})$, where $\Theta\in[0,2\pi)$. Here, the sum of the angles of the parties has to comply with the condition: $\sum_{j=1}^n\theta_j-\Theta  \equiv 0 \pmod{\pi}$. The test that we are interested in is the following:
\begin{equation}
\bigoplus_{j=1}^n Y_j=\frac{\sum_{j=1}^n\theta_j-\Theta}{\pi} \pmod 2
\end{equation} \newline
Let $\{P^n_\Theta, I-P^n_\Theta\}$ be the POVM that corresponds to the above test. We will prove by induction that:
\begin{equation}
P^n_\Theta=\ketbra{G^n_\Theta}{G^n_\Theta}+\frac{1}{2}I_{n}^\Theta
\end{equation}
where $I_{n}^\Theta$ is the projection on the space orthonormal to $\ket{G_\Theta^n}$ and $\ket{G_{\Theta+\pi}^n}$.

\paragraph{}For $n=1$ we have that $P^1_\Theta=\ketbra{G^1_{\theta_1}}{G^1_{\theta_1}}$ so the statement holds. We assume it is true for $n$ and we show the statement for $n+1$.

\paragraph{}Let $\{P^{n+1}_\Theta(\theta_1), I-P^{n+1}_\Theta(\theta_1)\}$ be the POVM that corresponds to the test for a given angle $\theta_1$. There are two cases:
\begin{enumerate}
\item Party 1 outputs $Y_1=0$. Then, the following equality should hold:
\begin{equation}
\bigoplus_{j=2}^{n+1} Y_j=\frac{\sum_{j=2}^{n+1}\theta_j-(\Theta-\theta_1)}{\pi} \pmod 2
\end{equation}
\item Party 1 outputs $Y_1=1$. Then, the following equality should hold:
\begin{equation}
\bigoplus_{j=2}^{n+1} Y_j=\frac{\sum_{j=2}^{n+1}\theta_j-(\Theta-\theta_1+\pi)}{\pi} \pmod 2
\end{equation}
\end{enumerate}

\paragraph{}Let $\Theta'\equiv\Theta-\theta_1\pmod{2\pi}$. It is evident that the first outcome of the test is equivalent to $P^n_{\Theta'}$ and the second to $I-P^n_{\Theta'}$. For any given $\theta_1$, we have:
\begin{eqnarray}
P^{n+1}_\Theta(\theta_1)&=&\ketbra{G^1_{\theta_1}}{G^1_{\theta_1}}\otimes P^n_{\Theta'}+\ket{G^1_{\theta_1+\pi}}\bra{G^1_{\theta_1+\pi}}\otimes(I-P^n_{\Theta'})\\
&=&\ketbra{G^1_{\theta_1}}{G^1_{\theta_1}}\otimes\ketbra{G^n_{\Theta'}}{G^n_{\Theta'}}+ \ket{G^1_{\theta_1+\pi}}\bra{G^1_{\theta_1+\pi}}\otimes \ket{G^n_{\Theta'+\pi}}\bra{G^n_{\Theta'+\pi}} \nonumber \\
&+&\frac{1}{2}\Big(\ketbra{G^1_{\theta_1}}{G^1_{\theta_1}}+\ket{G^1_{\theta_1+\pi}}\bra{G^1_{\theta_1+\pi}}\Big)\otimes I_n^{\Theta'} \nonumber \\
&=&\ketbra{G^{n+1}_\Theta}{G^{n+1}_\Theta}+\ketbra{\Phi_{\theta_1}}{\Phi_{\theta_1}}+\frac{1}{2}
I_1\otimes I_n^{\Theta'} \nonumber
\end{eqnarray}
where we define:
\begin{equation}
\ket{\Phi_a}=\frac{1}{\sqrt{2}}\big(\ket{G^1_a}\ket{G^n_{\Theta-a}}-\ket{G^1_{a+\pi}}\ket{G^n_{\Theta-a+\pi}}\big)
\end{equation}
It is straightforward to verify that:
\begin{eqnarray}
I_{n+1}^{\Theta}&=&\ketbra{\Phi_{\theta_1}}{\Phi_{\theta_1}}+\ketbra{\Phi_{\theta_1+\frac{\pi}{2}}}{\Phi_{\theta_1+\frac{\pi}{2}}}+I_1\otimes I_n^{\Theta'}
\end{eqnarray}
where as before $I_{n+1}^\Theta$ is the projection on the space orthonormal to $\ket{G^{n+1}_\Theta}$ and $\ket{G^{n+1}_{\Theta+\pi}}$.
Since angle $\theta_1$ is chosen uniformly at random in $[0,\pi)$, we have that:
\begin{eqnarray}
P^{n+1}_\Theta&=&\frac{1}{\pi}\int_0^{\pi}P_\Theta^{n+1}(\theta_1)d\theta_1 \\
                          &=&\frac{1}{\pi}\int_0^{\pi/2}\Big[P_\Theta^{n+1}(\theta_1)+P_\Theta^{n+1}(\theta_1+\frac{\pi}{2})\Big]d\theta_1\nonumber\\
                           &=&\ketbra{G^{n+1}_\Theta}{G^{n+1}_\Theta}+\frac{1}{2}I_{n+1}^\Theta\nonumber
\end{eqnarray}
For $\Theta=0\pmod {2\pi}$, we can easily infer the basic argument of the proof, that the test is equivalent to performing the POVM \{$P^n_0,I-P^n_0\}$. We can therefore express any state $\rho$ with fidelity $F(\rho)$ to the GHZ state as
$\rho=F(\rho)\ketbra{G_0^n}{G_0^n}+(1-F(\rho))\mathcal{\chi},
$
where $\mathcal{\chi}$ is a $2^n\times 2^n$ density matrix with zero in the place of $\ketbra{G_0^n}{G_0^n}$. We then have $P(\rho) =   \tr(P_0^n \rho) \leq \frac{1}{2}+\frac{F(\rho)}{2}$.\\


\subsection*{Security in the Dishonest Model}

{\bf Figures of merit for the dishonest case.} Without loss of generality, the source generates a state $\sum_r p_r \kb{r} \otimes \kb{\Psi_r}_{\mathcal{HDE}}$ where $r$ corresponds to some classical information controlled by the dishonest players, and $\mathcal{HDE}$ are respectively the Hilbert space of the honest parties, the dishonest parties and the external environment, which no parties can control.

\paragraph{}Here we prove a lower bound for the fidelity of the shared state, when the $n-k$ parties are dishonest and there are no loss in the system. Since we consider that the dishonest parties can collaborate between themselves and with the source, any security statement should consider that they can apply any operation $U_r$ (possibly depending on $r$) to their part of the state that works to their advantage. More specifically, we prove the following theorem:

\anote{\begin{theorem}[Dishonest Case]\label{Theorem_dis'}
		Let $\rho = \sum_{r=1}^{R} p_r \kb{r} \otimes \rho_r$ be the state shared between $n$ parties in the space $\mathcal{HD}$. If $F'(\rho):= \sum_r p_r \max_{\hspace*{0.05cm} U^r_{n-k}} F\big((\mathbb{I}_k \otimes U^r_{n-k}) \rho_r (\mathbb{I}_k\otimes (U^r_{n-k})^\dag),\ket{G_0^n}\big)$, where $U^r_{n-k}$ are operators on the space of the dishonest parties, then $F'(\rho)\geq 4 P(\rho)-3$.
	\end{theorem}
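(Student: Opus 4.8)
The plan is to prove the equivalent statement $P(\rho)\leq \tfrac34+\tfrac14 F'(\rho)$, i.e.\ that the infidelity $1-F'$ is at most four times the failure probability $1-P$. First I would reduce to a single branch of the source's classical register: since $P(\rho)=\sum_r p_r P_r$ is linear while $F'$ is the same convex combination of the per-branch maxima $\max_{U^r}F$, it suffices to show $P_r\leq \tfrac34+\tfrac14\max_{U^r_{n-k}}F(\cdots)$ for each fixed $r$, after which the weighted sum reproduces the claim. Fixing $r$ and reducing to pure states by purification, I would then carry over the one building block from the honest analysis: the all-honest pass operator $P_0^n=\ketbra{G_0^n}{G_0^n}+\tfrac12 I_n^0$, together with the honest bound $\tr(P_0^n\sigma)\leq \tfrac12+\tfrac12\bra{G_0^n}\sigma\ket{G_0^n}$ valid for every state $\sigma$.

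The heart of the argument is to express the \emph{dishonest-optimal} pass probability and tie it to $F'$. Writing $\ket{\psi}=\sum_{b\in\{0,1\}^k}\ket{b}_H\ket{v_b}_D$, the honest parties realise the parity observable $O_H=\bigotimes_{j\in H}(\cos\theta_j X_j+\sin\theta_j Y_j)$ on their block, while the dishonest parties --- who receive only their own angles over the private channels and therefore \emph{never learn} $\{\theta_j\}_{j\in H}$ --- must supply a single parity bit. By Helstrom's bound the best such reply gives $P=\tfrac12+\tfrac12\,\norm{\Delta}_1$, where $\Delta=\tr_H[(O_H\otimes\mathbb{I})\rho]$ is the difference of the two dishonest states conditioned on the honest parity, \emph{averaged over the honest angles inside the norm} (for $k\geq2$ the dishonest coalition cannot track these angles; for $k=1$ they are fixed by the global constraint and the average sits outside, which is the extremal case). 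I would then observe that the coherences in $\Delta$ and the quantity $F'$ are governed by the \emph{same} data, the pair $\{\ket{v_{0^k}},\ket{v_{1^k}}\}$: indeed $F'=\tfrac12\norm{W}_1^2$ with $W=\ketbra{v_{0^k}}{0^{n-k}}+\ketbra{v_{1^k}}{1^{n-k}}$, giving $F'=\tfrac12(\|v_{0^k}\|^2+\|v_{1^k}\|^2)+\sqrt{\|v_{0^k}\|^2\|v_{1^k}\|^2-|\!\braket{v_{0^k}}{v_{1^k}}\!|^2}$, while the weight on the remaining strings enters only through $I_n^0$ and is controlled by AM--GM.

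With these identifications I would bound $P$ by applying concavity of $\sqrt{\cdot}$ (Jensen) to the angle-average of $\norm{\Delta}_1$, reducing the trigonometric integral to the second moment $\tfrac1\pi\int_0^\pi\cos^2\theta\,d\theta=\tfrac12$, and close with a short algebraic step using the normalisation constraint $\|v_{0^k}\|^2\|v_{1^k}\|^2\leq \tfrac14$ and Cauchy--Schwarz. The factor $4$ then appears cleanly: morally it is the square of the honest factor $2$, the dishonest freedom costing exactly one extra application of the $\tfrac12+\tfrac12(\cdot)$ inequality on top of the honest fidelity bound, so that $P\leq\tfrac12+\tfrac12\big(\tfrac12+\tfrac12 F(\sigma^*)\big)=\tfrac34+\tfrac14 F(\sigma^*)\leq\tfrac34+\tfrac14 F'$, after which reassembling over $r$ finishes the proof.

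I expect the main obstacle to be the second step: correctly modelling the angle-ignorant, jointly-acting dishonest coalition and reducing the general $(n,k)$ problem to the effective two-party problem of one honest \emph{logical} qubit, spanned by $\ket{0^k}_H,\ket{1^k}_H$, facing the dishonest block. The delicate points are that the dishonest measurement may depend on the dishonest angles but not the honest ones --- so the honest angles must be averaged \emph{inside} the trace norm, which is precisely what defeats the otherwise perfect cheating strategy available on states orthogonal to the GHZ subspace --- and that the ``leaked'' components outside $\mathrm{span}\{\ket{0^k},\ket{1^k}\}$ raise $P$ only marginally while not aiding $F'$ at all. Making the dephasing induced by the private honest angles quantitative, and verifying that it ties the cheating probability to $F'$ with no constant better than $4$ (with the fully-informed $k=1$ coalition as the tight extremal configuration), is where the real work lies.
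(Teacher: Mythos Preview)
Your overall architecture---reduce to a pure branch by linearity, apply Helstrom to the dishonest guess, compute $F'$ via Uhlmann on the honest reduced state, then integrate over angles---matches the paper's, and your formula $F'=\tfrac12\|W\|_1^2=\tfrac12(\|v_{0^k}\|^2+\|v_{1^k}\|^2)+\sqrt{\|v_{0^k}\|^2\|v_{1^k}\|^2-|\braket{v_{0^k}}{v_{1^k}}|^2}$ is correct (it is a reparametrisation of the paper's Schmidt data $p_\theta,q_\theta$). But your information model for the dishonest coalition has a real error.

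You claim that for $k\ge 2$ the dishonest parties ``cannot track'' the honest angles, so the average sits \emph{inside} the trace norm, with $k=1$ the only case where it sits outside and hence ``extremal''. This is false: the global constraint $\sum_j\theta_j\equiv 0\pmod\pi$ together with the dishonest parties' own angles $\{\theta_j\}_{j\in D}$ always reveals $\theta:=\sum_{j\in H}\theta_j\pmod\pi$. On the two-dimensional honest block $\mathrm{span}\{\ket{0^k},\ket{1^k}\}$ the Helstrom operator depends on the honest angles \emph{only through} $\theta$, so for every $k$ the dishonest optimise their POVM per $\theta$ and the $\theta$-average sits \emph{outside} the norm. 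Putting it inside would strictly underestimate $P(\rho)$ and would not prove the theorem; there is no sense in which $k=1$ is extremal. What the dishonest do \emph{not} know for $k\ge 2$ are the individual honest angles conditioned on their sum; that residual randomness is what reduces the $\ket{\mathcal{X}}$ component's contribution to exactly $\tfrac12$, but this is a separate (and easier) point from the main Helstrom step. The paper accordingly computes $\Pr[\text{guess }Y_H\mid\theta]=\tfrac12+\tfrac12\big\|\,\kb{\Psi_\theta}-\kb{\Psi_{\theta+\pi}}\,\big\|_1$ for each $\theta$ and only then integrates.

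A smaller point: the factor $4$ does not arise from ``two nested applications'' of the honest $\tfrac12+\tfrac12(\cdot)$ bound. In the paper it comes from a single pointwise inequality $\sqrt{x}\le\tfrac{1+x}{2}$ applied to the per-$\theta$ Helstrom expression $\tfrac12+\tfrac12\sqrt{A(\theta)}$, yielding $\tfrac34+\tfrac14 A(\theta)$; one then shows $\tfrac1\pi\int_0^\pi A(\theta)\,d\theta\le F'$ using the Schmidt analysis and $\tfrac1\pi\int_0^\pi\sin^2(\theta+\alpha)\,d\theta=\tfrac12$. Your Jensen route would instead give $P\le\tfrac12+\tfrac12\sqrt{F'}$, which is actually tighter but still needs the same $\sqrt{F'}\le\tfrac{1+F'}{2}$ step to reach the stated bound.
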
}
	
	\begin{proof}
		\anote{\paragraph{Case 1 : Pure state.} We first consider the case without classical information $r$ and without environment, $\ie$ where $\rho$ is a pure state $\kb{\Psi}_{\mathcal{HD}}$. We write }
\be		
\ket{\Psi}=\ket{G_\theta^k}\ket{\Psi_\theta}+\ket{G_{\pi+\theta}^k} \ket{\Psi_{\pi+\theta}}+\ket{\mathcal{X}}
\ee
		where $\theta=\sum_{j\in H} \theta_j\pmod{\pi}$ is the honest angle, $H$ is the set of the honest parties  and $\ket{G_\alpha^k}=1/\sqrt{2}(\ket{0}^{\otimes k}+e^{i\alpha}\ket{1}^{\otimes k})$ for any angle $\alpha$. Note that the component of the honest parties in $\ket{\mathcal{X}}$ is orthogonal to both  $\ket{G_\theta^k}$ and $\ket{G_{\pi+\theta}^k}$.
		
		\paragraph{}The dishonest parties want to know in which of the two states $\ket{G^k_\theta}$ and $\ket{G^k_{\theta+\pi}}$ the state the honest parties share will collapse into after the measurement, and by consequence what will be the honest output $Y_H=\bigoplus_{i\in H}Y_i$. They will perform a Helstrom measurement on their share in order to distinguish between $\ket{\Psi_\theta}$ and $\ket{\Psi_{\theta+\pi}}$. This measurement is optimal and gives the following bound:
		\be
			\text{Pr[guess }Y_H|\theta]=\frac{1}{2}+\frac{1}{2}\Big\|\ketbra{\Psi_{\theta}}{\Psi_{\theta}}-\ketbra{\Psi_{\theta+\pi}}{\Psi_{\theta+\pi}}\Big\|
		\ee
		
		\paragraph{}To calculate the above norm, we make use of a known property, that the trace norm of a Hermitian matrix is equal to the sum of the absolute values of its eigenvalues. After some simple calculations we can verify that the above probability is equal to:
		\begin{align}\label{cheat_prob}
			\text{Pr[guess }Y_H|\theta] &=\frac{1}{2}+\frac{1}{2}\sqrt{\big(\norm{\ket{\Psi_\theta}}^2+\norm{\ket{\Psi_{\theta+\pi}}}^2 \big)^2-4|\braket{\Psi_\theta}{\Psi_{\theta+\pi}}|^2}\nonumber\\
			&\leq\frac{1}{2}+\frac{1}{2}\Big(\frac{\big(\norm{\ket{\Psi_\theta}}^2+\norm{\ket{\Psi_{\theta+\pi}}}^2 \big)^2-4|\braket{\Psi_\theta}{\Psi_{\theta+\pi}}|^2+1}{2}\Big)\nonumber\\
			&=\frac{3}{4}+\frac{1}{4}\Big(\big(\norm{\ket{\Psi_\theta}}^2+\norm{\ket{\Psi_{\theta+\pi}}}^2 \big)^2 -4|\braket{\Psi_\theta}{\Psi_{\theta+\pi}}|^2\Big)
		\end{align}
		
		\paragraph{}We now perform a Schmidt decomposition of $\ket{G^k_\theta}\ket{\Psi_\theta} + \ket{G^k_{\theta + \pi}}\ket{\Psi_{\theta + \pi}}$:
		\begin{equation}\label{4}
			\ket{G^k_\theta}\ket{\Psi_\theta} + \ket{G^k_{\theta + \pi}}\ket{\Psi_{\theta + \pi}} = \ket{A_\theta^0}\ket{B_\theta^0} + \ket{A_\theta^1}\ket{B_\theta^1} \end{equation}
		where $\braket{A_\theta^0}{A_\theta^1} = \braket{B_\theta^0}{B_\theta^1} = 0$. We use the following normalization: $||\ket{A_\theta^0}||^2 = ||\ket{A_\theta^1}||^2 = 1$, $||\ket{B_\theta^0}||^2 = p_\theta$, $||\ket{B_\theta^1}||^2 = q_\theta$. There exist $z_0,z_1\in\mathbb{C}$ such that:
		\be
		\ket{A_\theta^0} = z_0 \ket{G^k_\theta} + z_1 \ket{G^k_{\theta + \pi}} \quad \textrm{and} \quad
		\ket{A_\theta^1} = z_1^* \ket{G^k_\theta} - z_0^{*} \ket{G^k_{\theta + \pi}}
		\ee
		where $|z_0|^2 + |z_1|^2 = 1$, which gives us:
		\begin{align}
			\ket{A_\theta^0}\ket{B_\theta^0} + \ket{A_\theta^1}\ket{B_\theta^1}&= (z_0 \ket{G^k_\theta} + z_1 \ket{G^k_{\theta + \pi}})\ket{B_0} + (z_1^* \ket{G^k_\theta} - z_0^{*} \ket{G^k_{\theta + \pi}}) \ket{B_1} \nonumber \\
			&= \ket{G^k_{\theta}}(z_0 \ket{B_\theta^0} + z_1^* \ket{B_\theta^1}) + \ket{G^k_{\theta + \pi}}(z_1 \ket{B_\theta^0} - z_0^* \ket{B_\theta^1})
		\end{align}
		and from Eq.~(\ref{4}) we have:
		\begin{align}
			\ket{\Psi_\theta} = z_0 \ket{B_\theta^0} + z_1^* \ket{B_\theta^1} \nonumber \\
			\ket{\Psi_{\theta + \pi}} =  z_1 \ket{B_\theta^0} - z_0^* \ket{B_\theta^1}
		\end{align}
		Since $\ket{A_\theta^0}$ and $ \ket{A_\theta^1}$ are on the same subspace as $ \ket{G^k_\theta}$ and $\ket{G^k_{\theta + \pi}}$, there exist $x\in\mathbb{R},y\in\mathbb{C}$ such that:
\be
		\ket{A_\theta^0} = x \ket{0^k} + y \ket{1^k} \quad \textrm{and} \quad \ket{A_\theta^1} = y^* \ket{0^k} - x \ket{1^k}  
\ee
		where  $x^2 + |y|^2 = 1$ (we can assume that  $x \in \mathbb{R}$ up to a global phase on $\ket{A_0}$ and $\ket{A_1}$). Then:
		\begin{align}
			|z_0|^2 = |\braket{A_\theta^0}{G_k^\theta}|^2 = \frac{1}{{2}}|x + ye^{i \theta}|^2
		\end{align}
		and since $y\in\mathcal{C}$, we rewrite $y = |y|e^{i \alpha}$ and get:
		\begin{align}
			|z_0|^2 & = \frac{1}{2} \big|x + |y|e^{i \theta + \alpha}\big|^2
			= \frac{1}{2}(1 + 2x|y|\cos(\theta + \alpha))
		\end{align}
		Using $|z_0|^2 + |z_1|^2 = 1$, we have $|z_1|^2 = \frac{1}{2}(1 - 2x|y|\cos(\theta + \alpha))$. Also, from $x^2,|y|^2 \ge 0$ and $x^2 + |y|^2 = 1$, we have that $x^2|y|^2 \le 1/4$. This gives us:
		\begin{align}\label{inner_prod}
			|\braket{\Psi_\theta}{\Psi_{\theta + \pi}}|^2 &= (p_\theta-q_\theta)^2 |z_0|^2 |z_1|^2 = (p_\theta-q_\theta)^2 \frac{1}{4}(1 - 4x^2|y|^2\cos^2(\theta + \alpha))\nonumber\\
			&\ge (p_\theta-q_\theta)^2 \frac{1}{4}(1 - \cos^2(\theta + \alpha))= (p_\theta-q_\theta)^2 \frac{1}{4}\sin^2(\theta + \alpha)
		\end{align}
		We then revisit Eq.~(\ref{cheat_prob}):
		\begin{equation}\label{guessoutput}
			\Pr[\text{guess }Y_H | \theta] \le \frac{3}{4}+ \frac{1}{4}\big((p_\theta+q_\theta)^2 - (p_\theta-q_\theta)^2 \sin^2(\theta+\alpha)\big)
		\end{equation}
		
		\paragraph{}Now, let us consider the optimal local operation that the dishonest parties can perform on their state, in order to maximize their cheating probability. If the reduced density matrices of the honest parties of the ideal state $\ket{G_0^n}$ and the state $\rho$ are $\sigma_H$ and $\rho_H$ respectively, it holds that there exists a local operation $R$ on the dishonest state that maximizes the fidelity:
		\be
		F'(\rho)=F((I\otimes R)\ket{\Psi},\ket{G_0^n}) = F(\sigma_H,\rho_H)
		\ee
		
		\paragraph{}Let us decompose $\ket{G_0^n}$ in the same orthonormal bases for the honest parties, as we did for $\ket{\Psi}$. We have  $\ket{G_0^n}=\ket{A_\theta^0}\ket{C^0}+\ket{A_\theta^1}\ket{C^1}$. Then:
		\begin{align}
			\sigma_H&=\frac{1}{2}\big(\ketbra{A_\theta^0}{A_\theta^0}+\ketbra{A_\theta^1}{A_\theta^1}    \big)\\
			\rho_H   &=p_\theta\ketbra{A_\theta^0}{A_\theta^0}+q_\theta\ketbra{A_\theta^1}{A_\theta^1}+\tr_{n-k}{\ket{\mathcal{X}}\bra{\mathcal{X}}}
		\end{align}
		and we can express fidelity $F'(\rho)=\tr[\sqrt{\sqrt{\rho_H}\sigma_H\sqrt{\rho_H}}]^2$, which gives:
		\begin{eqnarray}\label{fidelity'}
			F'(\rho)&=&\frac{1}{2}(\sqrt{p_\theta}+\sqrt{q_\theta})^2=\frac{p_\theta+q_\theta}{2}+\sqrt{p_\theta q_\theta} \nonumber \\
			&\geq&\frac{(p_\theta+q_\theta)^2}{2}+2p_\theta q_\theta=(p_\theta+q_\theta)^2-\frac{(p_\theta-q_\theta)^2}{2}
		\end{eqnarray}
		because for all non-negative $p$ and $q$ such that $p+q \le 1$, it holds that $p+q\geq (p+q)^2$ for $p+q \le 1$ and also that $\sqrt{pq}\geq 2pq$. Let us note here that whatever decomposition we do to the state $\ket{\Psi}$, the sum $(p_\theta+q_\theta)$ is a constant that always equals $\|\ket{\Psi_\theta}\|^2+\|\ket{\Psi_{\theta+\pi}}\|^2$. It follows that $(p_\theta-q_\theta)^2$ is lower bounded by the constant $2((p_\theta+q_\theta)^2-F'(\rho))$. Since $\theta$ is chosen uniformly at random, we have that:
		\begin{align}
			P(\rho)&= \frac{1}{\pi} \int_{0}^{\pi} \Pr[\text{guess }Y_H| \theta] \\
			&\leq  \frac{3}{4}+ \frac{1}{4}\Big((p_\theta+q_\theta)^2 - \frac{1}{\pi}\int_{0}^{\pi}(p_\theta-q_\theta)^2\sin^2(\theta + \alpha) d\theta\Big) \\
			& \leq \frac{3}{4}+ \frac{1}{4}\Big((p_\theta+q_\theta)^2+ F'(\rho)- (p_\theta+q_\theta)^2\Big) \\
			&\leq \frac{3}{4}+ \frac{1}{4}F'(\rho)
		\end{align}
		\anote{
			\paragraph{Case 2 : No classical information, mixed state.} We consider the case where $\rho = \sum_j q_j \kb{\Psi_j}_{\mathcal{HD}}$. Since the two functions $P(\cdot)$ and $F(\cdot)$ are linear, we can write 
			\begin{align}
				P(\rho) = \sum_j q_j P(\kb{\Psi_j} \le \frac{3}{4} + \frac{1}{4}\sum_j q_j F'(\kb{\Psi_j}) = \frac{3}{4} + \frac{1}{4}F'(\rho)
			\end{align}
			\paragraph{Case 3 : General Case.} We write $\rho = \sum_{r=1}^R p_r \kb{r} \otimes \rho_r$. We then write
			\begin{align}
				P(\rho) & = \sum_r p_r P(\rho_r) = \frac{3}{4} + \frac{1}{4} \sum_r p_r(F(\rho_r)) \\
				& = \frac{3}{4} + \frac{1}{4} \sum_r p_r \max_{\hspace*{0.05cm} U^r_{n-k}} F\big((\mathbb{I}_k \otimes U^r_{n-k}) \rho_r (\mathbb{I}_k\otimes (U^r_{n-k})^\dag),\ket{G_0^n}\big)
			\end{align} 
		}
	\end{proof}

\begin{corollary}
Let $\rho$ be the state shared between $n$ parties. If $F'(\rho):=\max_U F\big((\mathbb{I}_k\otimes U_{n-k})\rho(\mathbb{I}_k\otimes U_{n-k}),\ket{G_0^n}\big)=\frac{1}{2}$, where $U$ is an operator on the space of the dishonest parties, then

\begin{enumerate}
\item if the parties run the $\theta$-protocol, $P(\rho|\theta\text{-protocol})\leq\frac{1}{2}+\frac{1}{\pi}$.
\item if the parties run the $XY$-protocol, $P(\rho|XY\text{-protocol})\leq \cos^2(\frac{\pi}{8})$.
\end{enumerate}
\end{corollary}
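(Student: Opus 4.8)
The plan is to sharpen the proof of Theorem~\ref{Theorem_dis'} by retaining the \emph{exact} Helstrom guessing probability instead of the relaxation $\sqrt{a}\le(a+1)/2$ applied in Eq.~(\ref{cheat_prob}), and then substituting the equality constraint $F'(\rho)=1/2$. By the linearity reduction of Cases~2 and~3 of that proof it is enough to treat a pure state $\ket{\Psi}=\ket{G^k_\theta}\ket{\Psi_\theta}+\ket{G^k_{\theta+\pi}}\ket{\Psi_{\theta+\pi}}+\ket{\mathcal{X}}$, for which the optimal (Helstrom) distinguishing measurement gives
\begin{equation}
\Pr[\text{guess }Y_H\mid\theta]=\frac{1}{2}+\frac{1}{2}\sqrt{(p+q)^2-4\,|\braket{\Psi_\theta}{\Psi_{\theta+\pi}}|^2}.
\end{equation}
I would reuse the Schmidt identities already derived: the weights obey $p+q=:S$, a constant independent of $\theta$ (the total weight $\norm{\ket{\Psi_\theta}}^2+\norm{\ket{\Psi_{\theta+\pi}}}^2$ on the honest GHZ subspace), while Eq.~(\ref{inner_prod}) gives $|\braket{\Psi_\theta}{\Psi_{\theta+\pi}}|^2=\tfrac14(p-q)^2\big(1-c\cos^2(\theta+\alpha)\big)$ with $c=4x^2|y|^2\in[0,1]$. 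Using $c\le1$ then yields $\Pr[\text{guess }Y_H\mid\theta]\le\tfrac12+\tfrac12\sqrt{4pq+(p-q)^2\cos^2(\theta+\alpha)}$.

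The constraint $F'(\rho)=\tfrac12(\sqrt{p}+\sqrt{q})^2=\tfrac12$ forces $\sqrt{p}+\sqrt{q}=1$, hence $4pq=(1-S)^2$ and $(p-q)^2=2S-1$ with $S\in[\tfrac12,1]$. The step that makes both cases elementary is the identity $(1-S)^2+(2S-1)\cos^2\phi=S^2\cos^2\phi+(1-S)^2\sin^2\phi$, which by the triangle inequality $\sqrt{A^2+B^2}\le|A|+|B|$ gives
\begin{equation}
\sqrt{4pq+(p-q)^2\cos^2(\theta+\alpha)}\le S\,|\cos(\theta+\alpha)|+(1-S)\,|\sin(\theta+\alpha)|.
\end{equation}
This linearisation is convenient because it dispenses entirely with having to locate the optimal $S$.

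For the $\theta$-protocol I would average over $\theta$ uniform on $[0,\pi)$; because $\int_0^\pi|\cos|=\int_0^\pi|\sin|=2$ and both integrands are $\pi$-periodic (so $\alpha$ drops out), the right-hand side integrates to $\tfrac1\pi\big(2S+2(1-S)\big)=\tfrac{2}{\pi}$ for \emph{every} admissible $S$, whence $P(\rho)\le\tfrac12+\tfrac1\pi$. For the $XY$-protocol the honest angle is uniform on $\{0,\pi/2\}$, so I average the two values at $\theta=0$ and $\theta=\pi/2$; bounding each square root as above and then using $|\cos\alpha|+|\sin\alpha|\le\sqrt2$ (independently of $S$ and $\alpha$) bounds their sum by $\sqrt2$, giving $P(\rho)\le\tfrac12+\tfrac{\sqrt2}{4}=\cos^2(\pi/8)$. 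The saturating configuration $S=1$, $\alpha=\pi/4$ is precisely the $\pi/4$-rotated Bell pair identified in the main text.

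I expect the genuine difficulty to lie not in the pure-state optimisation, which the triangle-inequality trick renders almost immediate, but in lifting the bound to arbitrary shared states under the single-operator definition of $F'$ used in the Corollary. Since that $F'$ is a maximum of linear functionals it is \emph{convex} in $\rho$, so a naive Jensen argument over a pure-state decomposition runs in the wrong direction; the correct route is the classical-register decomposition of Cases~2 and~3 of Theorem~\ref{Theorem_dis'}, where each value $r$ is announced to the dishonest parties and $P(\cdot)$ is averaged componentwise. Getting this reduction to respect the $F'=1/2$ hypothesis is the step I would treat most carefully.
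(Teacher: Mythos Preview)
Your proposal is correct and shares the paper's overall frame: keep the exact Helstrom expression rather than the relaxed square-root bound of Theorem~\ref{Theorem_dis'}, substitute the constraint $(\sqrt p+\sqrt q)^2=1$ to get $4pq=(1-S)^2$ and $(p-q)^2=2S-1$, and bound the resulting one-parameter family in $S\in[\tfrac12,1]$. The difference is in how the $S$-dependence is eliminated. The paper writes the radicand as $S^2-(2S-1)\sin^2(\theta+\alpha)$ and, for the $\theta$-protocol, simply asserts that the $\theta$-average is ``analytically'' maximal at $S=1$, after which the integrand collapses to $|\cos(\theta+\alpha)|$; the $XY$ case is then handled by a separate calculation. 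Your rewriting of the same radicand as $S^2\cos^2\phi+(1-S)^2\sin^2\phi$ followed by the triangle inequality $\sqrt{A^2+B^2}\le|A|+|B|$ is a genuinely cleaner route: the linearised bound $S|\cos\phi|+(1-S)|\sin\phi|$ is tight at $S=1$, yet after averaging over $[0,\pi)$ (or over $\{0,\pi/2\}$) the $S$-dependence cancels \emph{identically}, so no optimisation over $S$ is required at all and both protocols fall out in one line. This replaces the paper's unproven ``analytically shown'' step with an elementary inequality. Your caution about lifting the bound to mixed states under the single-operator definition of $F'$ is well placed; the paper does not spell this out for the Corollary either, and the classical-register decomposition you point to is indeed the correct mechanism.
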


\begin{proof}
We will first show the upper bound of the pass probability for the $\theta$-protocol and then examine the special case where the honest angle $\theta$ is either equal to 0 or $\pi/2$. Following the derivations of Eq.~(\ref{cheat_prob}) and Eq.~(\ref{guessoutput}) we have
\begin{align}\label{Prob_theta}
\text{Pr[guess }Y_H|\theta] &=\frac{1}{2}+\frac{1}{2}\sqrt{\big(\norm{\ket{\Psi_\theta}}^2+\norm{\ket{\Psi_{\theta+\pi}}}^2 \big)^2-4|\braket{\Psi_\theta}{\Psi_{\theta+\pi}}|^2} \nonumber\\
&\leq \frac{1}{2}+\frac{1}{2}\sqrt{(p_\theta + q_\theta)^2 - (p_\theta- q_\theta)^2 \sin^2(\theta + \alpha)}
\end{align}
We know that $S = p_\theta+q_\theta$ is a constant, independent of $\theta$. From Eq. (\ref{fidelity'}) and the fact that $F'(\rho) = \frac{1}{2}$, we have that $2\sqrt{p_\theta q_\theta} = 1 - S$.  We can easily infer:
\begin{align}
(p_{\theta} - q_{\theta})^2=S^2-4p_\theta q_\theta=2S-1
\end{align}
Eq. (\ref{Prob_theta}) then becomes:
\begin{align}
\text{Pr[guess }Y_H|\theta] & \leq \frac{1}{2}+\frac{1}{2}\sqrt{S^2 - (2S - 1) \sin^2(\theta + \alpha)}
\end{align}
It also holds that $F'(\rho) \le S$ which implies $S \ge \frac{1}{2}$. When $S \in [1/2,1]$, we can analytically show that $P[\rho|\theta\text{-protocol}]$ is maximal for $S = 1$. This gives
\be
P[\rho|\theta\text{-protocol}] \le \frac{1}{\pi} \int_{0}^{\pi} \frac{1}{2}+\frac{1}{2}\sqrt{1 - \sin^2(\theta + \alpha)} d\theta = \frac{1}{2} + \frac{1}{\pi} \approx 0.818.
\ee

\paragraph{}Now if the parties are running the $XY$-protocol, then instead of integrating from 0 to $\pi$, we just need to add the cases where $\theta=0$ and $\theta=\pi/2$. We have:
\begin{eqnarray}
P[\rho|XY\text{-protocol}] &= &\frac{1}{2}\big[\text{Pr[guess }Y_H|0]+\text{Pr[guess }Y_H|\frac{\pi}{2}]\big]\\
                                           & \leq & \frac{1}{2}+\frac{1}{4}\big[\cos^2(\alpha)-\sin^2(\alpha)    \big]
\end{eqnarray}
Since $\alpha$ is a characteristic of the state, and can therefore be chosen by the source, the above probability is maximized for $\alpha=-\pi/4$, and is equal to $\cos^2(\pi/8)\approx 0.854$.\\

\end{proof}


\subsection*{Loss}
\paragraph{}If the Verifier is willing to accept an individual loss rate $\lambda$, then a cheating party can profit from declaring `loss' in order to increase the probability of passing the test. We are interested to see how the two protocols behave in the presence of loss. We concentrate on checking for genuine multipartite entanglement. Since the dishonest parties have full control of the source, and in particular their part (including purification), we treat the dishonest parties as a single system. That is, we say that a source state $|\psi\rangle_{H,D}$ is genuinely multipartite entangled if it is entangled across all bipartite cuts where $D$ is treated as a single party (that is all $D$ systems are on one side of the bipartition). 

\paragraph{}Looking only at GME in this way greatly simplifies the analysis. We now wish to bound the probability of passing the test for states which are not GME, that is, there exists a partition such that $|\psi\rangle_{H,D}$ is separable. To bound this take all the honest players which are on $D$'s side of this partition, and imagine the dishonest party has control of them too, {\it i.e.} we have a bigger $D$ including these (this cannot but help the dishonest party pass the test). We thus concentrate on product states of the form $\ket{H}_H\otimes\ket{D}_D$. \\

{\bf The $\theta$-protocol.}
Let $\ket{H} = \alpha\ket{0}^{\otimes k} + e^{i\theta'} \beta \ket{1}^{\otimes k}  + \gamma \ket{\mathcal{X}}$ the state shared by the honest players with $\ket{\mathcal{X}}$ orthogonal to both $\ket{0}^{\otimes k}$ and $\ket{1}^{\otimes k}$ and $\alpha,\beta \in \mathbb{R^+}$. For a fixed $\theta$ and using the characterization of our test, the honest players will output $Y_H = 0$ with probability 
\be
\Pr[Y_H=0 | \theta ]= \frac{|\gamma|^2}{2} + |\frac{\alpha}{\sqrt{2}} + \frac{\beta}{\sqrt{2}}e^{i (\theta' - \theta)}|^2
= \frac{1}{2} + \alpha\beta \cos(\theta' - \theta)
\ee

The dishonest parties want to guess $Y_H$. They will guess $Y_H = 0$ when $\cos(\theta' - \theta) \ge 0$ and  $Y_H = 1$ otherwise, and they will succeed with probability $\frac{1}{2} + \alpha\beta|\cos(\theta' - \theta)|$. This probability is maximized for $\alpha,\beta = \frac{1}{\sqrt{2}}$. Without any loss, the dishonest players succeed with probability:
\be
\frac{1}{\pi} \left(\int_{0}^{\pi} \frac{1}{2} + \frac{1}{2}|\cos(\theta' - \theta)| d\theta\right) = \frac{2}{\pi} \int_{\theta'}^{\theta' + \pi/2} \cos^2(\frac{\theta}{2})  d\theta
\ee
In the case where there is loss, the cheating players can post-select on a $\lambda$ fraction of the angles. This is the only thing they can do since their state $\ket{D}$ is unentangled with $\ket{H}$. The worst angles are the ones close to $\pi/2 + \theta'$. In that case, when the state is tested, the cheating players pass the test with probability:
\be
P(\lambda) = \frac{2}{\pi(1-\lambda)} \int_{\theta'}^{\theta' + \pi(1-\lambda)/2} \cos^2(\frac{\theta}{2})  d\theta = 
\frac{2}{\pi(1-\lambda)} \int_{0}^{\pi(1-\lambda)/2} \cos^2(\frac{\theta}{2})  d\theta
\ee

{\bf The XY-protocol.}
Analyzing this protocol is done in a similar way as before. We start from $\ket{H} = \alpha\ket{0}^{\otimes k} + e^{i\theta'} \beta \ket{1}^{\otimes k}  + \gamma \ket{\mathcal{X}}$ with $\alpha,\beta \in \mathbb{R^+}$.
\begin{itemize}
	\item The honest parties receive an even number of Pauli $Y$ measurement requests: this corresponds to them performing a $\theta$-test with $\theta = 0$. This means that the honest players output $Y_H = 0$ with probability $\frac{1}{2} + \alpha\beta \cos(\theta')$. The optimal dishonest strategy is to guess $Y_H = 0$ when $\cos(\theta') \ge 0$. Otherwise, they guess $Y_H = 1$. This overall strategy will succeed with probability $\frac{1}{2} + \alpha\beta |\cos(\theta')|$. Notice that this is maximized for $\alpha,\beta = \frac{1}{\sqrt{2}}$ which gives $\Pr[\text{pass test} | \text{even $Y$}]= \frac{1}{2} + \frac{|\cos(\theta')|}{2}$.
	\item The honest parties receive an odd number of Pauli $Y$ measurement requests: this corresponds to them performing a $\theta$-test with $\theta = \pi/2$. Similarly as above, we can show that $\Pr[\text{pass test} | \text{odd $Y$}]= \frac{1}{2} + \frac{|\cos(\theta' + \pi/2)|}{2} = \frac{1}{2} + \frac{|\sin(\theta')|}{2} $.
\end{itemize} 

In the case when there is no loss, the pass probability of state $\ket{H}$ is maximised for $\theta'=\pi/4$, since for both measurement settings of the honest parties, the pass probability is $cos^2(\pi/8)\approx 0.854$. For $50\%$ loss, the pass probability of state $\ket{H}$ is maximised for $\theta'=0$, since whenever the dishonest party is asked to measure in the Pauli $Y$ basis, he declares loss, resulting in a pass probability equal to 1. For any amount of loss between these two values, the optimal dishonest strategy is a probabilistic mixture of the two pure strategies:
\begin{itemize}
\item With probability $2\lambda$ the source sets $\theta'=0$, and whenever the dishonest party receives $Y$, he declares loss.
\item With probability $1-2\lambda$ the source sets $\theta'=\pi/4$.
\end{itemize}
Let $Q(\lambda)$ be the probability that the dishonest parties pass the test, conditioned on not declaring loss. We have:
\begin{align}
Q(\lambda) & = \frac{1}{1-\lambda}\left(\lambda\cdot \Pr[\text{pass test} |\theta'=0,X]+(1-2\lambda) \cdot \Pr[\text{pass test} | \theta'=\pi/4]\right) \\
& =  \frac{\lambda+(1-2\lambda) \cdot 0.854}{1-\lambda}
\end{align}

\paragraph{}Supplementary Figure~\ref{fig:GHZ4vsGHZ3LossDependence} shows the difference in the pass probability for the two tests when the amount of tolerated loss increases. Here, $Q(\lambda)$ is plotted for the $XY$-protocol test and $P(\lambda)$ is plotted for the $\theta$-protocol test.

\section*{Supplementary Note 2}

\subsection*{State generation}

\paragraph{}The generation of photon pairs in our setup is achieved by spontaneous four-wave mixing (SFWM) in fiber sources exploiting birefringent phase-matching~\cite{Halder09S,Clark11S}. The fibers are strongly birefringent ($\Delta n =4 \times 10^{-4}$) and microstructured, with the phase-matching generating signal-idler pairs cross-polarization to the pump laser. The waveguide contributions to the dispersion in addition to the birefringence tailor the SFWM to the generation of naturally narrowband spectrally uncorrelated photons when pumped with Ti-Sapphire laser pulses at ~726nm. This is achieved at the flat region of the phase-matching curves upon which the idler photons ($\lambda_i = 871$~nm) are group velocity matched to the pump pulse so that they become spectrally broad ($\Delta \lambda_i =2.2$~nm) whilst the signal photons ($\lambda_s = 623$~nm) are intrinsically narrowband ($\Delta \lambda_s =0.3$~nm). This narrowband phase-matching results in a Joint-Spectral Amplitude (JSA) which is highly separable for a wide range of pump bandwidths and thus single photons of high purity can be produced. The pump bandwidth can then be tuned to minimize the effects of deviating from the flat region at $726$~nm whilst reducing the self-phase modulation caused by short pulses, to arrive at an optimal pump bandwidth of $\Delta \lambda_p = 1.7$~nm.

\paragraph{}The fiber sources are then positioned in Sagnac-loop configurations in which the pump pulse is set to diagonal polarization and split at a polarizing beam-splitter (PBS), after which it is launched into the fiber in both directions simultaneously. The $90^{\circ}$ rotation of the fiber axis between its two facets results in the pump light being strongly suppressed out of the port of the PBS it entered, whilst the generated signal-idler pairs from each facet of the fiber are cross-polarized to the pump that generated them, so exit from the other port to the pump. The pairs generated from each direction traverse the same mode in reverse so on exiting the PBS they coherently share the same spatio-temporal mode and create the state $\rhalf (\ket{H}_s\ket{H}_i + \text{e}^{i \theta} \ket{V}_s \ket{V}_i)$ up to some phase $\theta$.

\paragraph{}The generation of three- and four-photon GHZ states in our setup is then achieved by a parity check, or `fusion', with post-selection~\cite{Pittman,Pan,Pan2,Bell12S}. Fusion processes of this sort require photons originating from two distinct sources to be indistinguishable in all degrees of freedom, however the fabrication of microstructured fibres can result in small inhomogeneities between fiber samples. To overcome these inhomogeneities one fiber source is temperature tuned so that the spectra of the signal photons match the spectra of the signal photons in the other fiber. This reduces the distinguishability. The spectra of the signal photons from each source were measured for a range of pump bandwidths and Gaussians fitted to determine their central wavelength. The second source was then temperature tuned using a Peltier cooler to $23.7^{\circ}C$ (relative to the ambient $17.6^{\circ}C$) to achieve optimal indistinguishability in the spectra (see Supplementary Figure~\ref{fig:SpectraPlots}). It is useful to note here that despite the inhomogeneous distribution of heat to the fiber, which results in significant broadening of the idler photon, the narrowband phase-matching scheme ensures that the signal photon remains narrowband. However, note there are still small differences between the signal spectra that arise from inhomogeneities in the fibre and these reduce the maximum fidelity achievable.

\paragraph{}The generation of the four-photon GHZ state proceeds by overlapping the signal photons from two Bell pair sources at a PBS and post-selecting the event in which one photon is detected at each output port. On the other hand, the three-photon GHZ state requires one of the sources to contribute just a single heralded signal photon in the state $\ket{D}=\rhalf(\ket{H}+\ket{V})$ and post-selecting similarly. This is achieved by pumping the second source in only one direction and rotating the heralded signal photon with a half-wave plate.

\paragraph{}Arbitrary local projective measurements are achieved by polarisation rotations using pairs of half- and quarter-wave plates, followed by polarising beam splitters (PBSs) to spatially separate the two eigenstates of polarisation, before collection into 8 silicon avalanche photodiode detectors. Pairs of automated achromatic half- and quarter-wave plates were calibrated to account for the chromatic deviations at signal and idler wavelengths, and numerical methods were used to find wave plate angles to map the input states to the states closest to the ideal projection vectors. Note that due to the chromatic deviations of wave plates, not all rotations can necessarily be achieved, so to allow the Pauli bases and the $X$-$Y$ equator to be reached, appropriate approximate states were chosen by fiber polarizers for input to the measurement stage.

\subsection*{Higher-order terms from sources}

\subsubsection*{4-qubit GHZ}

\paragraph{}The state generated by four-wave mixing in one source in an `entangled configuration' can be written as~\cite{Fulconis07}
\bqa
\ket{\psi}_{s,i}&=&{\cal N} (\ket{0,0}_{s,i}+\alpha( \ket{1_H,1_H}_{s,i}+\ket{1_V,1_V}_{s,i}) \nonumber \\
&&\hskip1cm +\alpha^2(\ket{2_H,2_H}_{s,i}+\ket{2_V,2_V}_{s,i} +\ket{1_H1_V,1_H1_V}_{s,i})+{\cal O}(\alpha^3)),
\eqa
where ${\cal N}$ is a normalisation constant, $|\alpha|^2=\bar{n}/(\bar{n}+1)$ is the mean number of signal-idler pairs generated in a pulse and $\ket{\ell_{H/V}}_k=\frac{1}{\sqrt{\ell !}}(\hat{a}^\dag_{H/V,k})^\ell\ket{0}_k$ for mode $k$. Taking two sources in the entangled configuration we have the starting state
\bqa
\ket{\psi}_{s_1,i_1,s_2,i_2}&=&{\cal N} (\ket{0,0}_{s_1,i_1}+\alpha( \ket{1_H,1_H}_{s_1,i_1}+\ket{1_V,1_V}_{s_1,i_1}) \nonumber \\
&&\hskip0cm +\alpha^2(\ket{2_H,2_H}_{s_1,i_1}+\ket{2_V,2_V}_{s_1,i_1} +\ket{1_H1_V,1_H1_V}_{s_1,i_1}) + {\cal O}(\alpha^3)) \otimes \nonumber \\
&&(\ket{0,0}_{s_2,i_2}+\alpha( \ket{1_H,1_H}_{s_2,i_2}+\ket{1_V,1_V}_{s_2,i_2}) \nonumber \\
&&\hskip0cm +\alpha^2(\ket{2_H,2_H}_{s_2,i_2}+\ket{2_V,2_V}_{s_2,i_2} +\ket{1_H1_V,1_H1_V}_{s_2,i_2})+{\cal O}(\alpha^3)),
\eqa
which gives 35 terms when expanded up to $\alpha^3$. Applying the PBS transformations for the fusion: $\hat{a}_{H,s_1} \to \hat{a}_{H,s_1}~,\hat{a}_{V,s_1} \to \hat{a}_{V,s_2},~\hat{a}_{H,s_2} \to \hat{a}_{H,s_2}$ and $\hat{a}_{V,s_2} \to \hat{a}_{V,s_1}$, and taking terms that have at least one photon in each mode we have the state
\bqa
\hskip-1cm&&\hskip-0.6cm\ket{\psi}= {\cal N}(\alpha^2(\ket{1_H,1_H,1_H,1_H}+\ket{1_V,1_V,1_V,1_V}) +\alpha^3(\ket{2_H,2_H,1_H,1_H} \\
&&+\ket{1_H,1_H,2_H,2_H}+\ket{2_V,1_V,1_V,2_V}+\ket{1_V,2_V,2_V,1_V}+\frac{1}{2}(\ket{1_H1_V,1_H,1_V,1_H1_V} \nonumber \\
&& +\ket{1_H1_V,1_H1_V,1_V,1_V}+\ket{1_H,1_H1_V,1_H1_V,1_H}+\ket{1_V,1_V,1_H1_V,1_H1_V})))_{s_1,i_1,s_2,i_2}, \nonumber 
\eqa
where the terms with $\alpha^2$ lead to the desired GHZ state and higher-order terms with $\alpha^3$ cause the state to be non-ideal. Here we have not included the possibility of further postselection depending on the measurement basis. For example, in the H/V basis the last 4 terms can be dropped, as two photons in a single mode will lead to both detectors from the polarisation analysis of that mode giving a click. This is not the case for all bases however. 

\paragraph{}The fidelity of $\ket{\psi}$ with respect to the ideal GHZ state is $F=2 \alpha^4/(2\alpha^4+5\alpha^6)$. For the pump power used in our experiment of $P= 7$~mW in each fibre in each direction we have $\bar{n}=0.05$ and therefore $\alpha=0.22$, leading to a fidelity of $F=0.89$. Thus, higher-order emissions up to $\alpha^3$ reduce the quality of the state, as measured using the fidelilty, by $11\%$. Terms with $\alpha^4$ are $0.22$ times smaller than those with $\alpha^3$ and will therefore have a contribution of only $1-2 \%$. At the pump power used we have a rate of four-folds of 1-2 $s^{-1}$. An interesting question is whether the higher-order emissions can be used by the dishonest parties to gain an advantage when loss is present. This is a system dependent issue which we leave for future work. However, we note that regardless of this, by using a smaller pump power one can reduce the impact of higher order terms on the fidelity in our setup, although at the expense of the overall four-fold rate. For example, with $P=1$~mW one can reduce the impact on the fidelity to only $2\%$.

\subsubsection*{3-qubit GHZ}

\paragraph{} The state generated by four-wave mixing in one source in a `product configuration' can be written as~\cite{Fulconis07}
\bqa
\ket{\psi}_{s,i}&=&{\cal N} (\ket{0,0}_{s,i}+\alpha \ket{1_H,1_H}_{s,i} +\alpha^2\ket{2_H,2_H}_{s,i}+{\cal O}(\alpha^3)).
\eqa
Taking one source in the product configuration and the other in the entangled configuration we have the starting state
\bqa
\ket{\psi}_{s_1,i_1,s_2,i_2}&=&{\cal N} (\ket{0,0}_{s_1,i_1}+\alpha \ket{1_H,1_H}_{s_1,i_1} +\alpha^2\ket{2_H,2_H}_{s_1,i_1}+{\cal O}(\alpha^3)) \otimes \\
&&(\ket{0,0}_{s_2,i_2}+\alpha( \ket{1_H,1_H}_{s_2,i_2}+\ket{1_V,1_V}_{s_2,i_2}) \nonumber \\
&&\hskip0cm +\alpha^2(\ket{2_H,2_H}_{s_2,i_2}+\ket{2_V,2_V}_{s_2,i_2} +\ket{1_H1_V,1_H1_V}_{s_2,i_2})+{\cal O}(\alpha^3)). \nonumber
\eqa
which gives 20 terms when expanded up to $\alpha^3$. Applying the HWP on mode $s_1$: $\hat{a}_{H,s_1} \to \frac{1}{\sqrt{2}}(\hat{a}_{H,s_1}+\hat{a}_{V,s_1})$, and the PBS transformations for the fusion: $\hat{a}_{H,s_1} \to \hat{a}_{H,s_1}~,\hat{a}_{V,s_1} \to \hat{a}_{V,s_2},~\hat{a}_{H,s_2} \to \hat{a}_{H,s_2}$ and $\hat{a}_{V,s_2} \to \hat{a}_{V,s_1}$, and taking terms that have at least one photon in each mode we have the state (conditioned on a detection of one or more photons in mode $i_1$) 
\bqa
\hskip-1cm&&\hskip-0.6cm\ket{\psi}= {\cal N} [\frac{\alpha^2}{\sqrt{2}}(\ket{1_H,1_H,1_H}+\ket{1_V,1_V,1_V}) +\frac{\alpha^3}{\sqrt{2}}\big(\ket{1_H,2_H,2_H}+\ket{2_V,1_V,2_V} \\
&&\hskip-0.4cm +\frac{1}{2}\ket{1_H1_V,1_H,1_H1_V}+\frac{1}{2}\ket{1_V,1_H1_V,1_H1_V} +\frac{1}{\sqrt{2}} \ket{2_H,1_H,1_H}+\frac{1}{\sqrt{2}}\ket{1_V,2_V,1_V} \nonumber \\
&&\hskip-0.4cm+\ket{1_H1_V,1_V,1_V}+\ket{1_H,1_H1_V,1_H} \big) ]_{s_1,s_2,i_2}, \nonumber
\eqa
where the terms with $\alpha^2$ lead to the desired GHZ state and higher-order terms with $\alpha^3$ cause the state to be non-ideal. Here we have again not included the possibility of further postselection depending on the measurement basis.

\paragraph{}The fidelity of $\ket{\psi}$ with respect to the ideal GHZ state is $F=\alpha^4/(\alpha^4+\frac{11}{4}\alpha^6)$. For the pump power used in our experiment of $P= 7$~mW in each fibre in each direction (with the source in the product configuration only pumped in one direction) we have $\bar{n}=0.05$ and therfore $\alpha=0.22$, leading to a fidelity of $F=0.88$. Thus, higher-order emissions up to $\alpha^3$ reduce the quality of the state, as measured using the fidelity, by $12\%$. Terms with $\alpha^4$ are $0.22$ times smaller than those with $\alpha^3$ and will therefore have a contribution of only $1-2 \%$. Again, for a low pump power of $P=1$~mW one can reduce the impact of the higher order terms on the fidelity to $2\%$.

\newpage

\end{widetext}

\end{document}